\documentclass[journal]{IEEEtran}
\usepackage[english]{babel}

\PassOptionsToPackage{hyphens}{url}\usepackage{hyperref}

\usepackage{algorithmic}
\usepackage[ruled,vlined]{algorithm2e}
\usepackage{amsfonts}
\usepackage{amsmath}
\usepackage{amsthm}
\usepackage{amssymb}
\usepackage{bbm}
\usepackage{cite}
\usepackage[table]{xcolor}
\usepackage[mathscr]{euscript}
\usepackage{enumitem}
\usepackage{epstopdf}
\usepackage{float}
\usepackage[T1]{fontenc}
\usepackage{glossaries}
\usepackage{graphicx}
\usepackage[latin9]{inputenc}
\usepackage{times}
\usepackage{latexsym}
\usepackage{mathtools}
\usepackage{multirow}
\usepackage{makeidx}
\usepackage{mathrsfs} 
\usepackage{nomencl}
\usepackage{rotating}
\usepackage{textcomp}
\usepackage{varioref}
\usepackage{xcolor}
\usepackage{phonetic}

\ifCLASSOPTIONcompsoc
\usepackage[caption=false,font=normalsize,labelfon
t=sf,textfont=sf]{subfig}
\else
\usepackage[caption=false,font=footnotesize]{subfi
g}
\fi

\DeclareGraphicsExtensions{.eps,.png}

\allowdisplaybreaks
\newcommand{\bpara}[1]		{\medskip \noindent {\bf #1}}

\onecolumn

\usepackage{soul,hyperref}
\usepackage[scaled=.80]{beramono}

\usepackage{cite}
\usepackage[switch]{lineno}

\ifCLASSINFOpdf
\else
\fi

\hyphenation{op-tical net-works semi-conduc-tor}

\begin{document}
\renewcommand{\thefootnote}{\normalsize \arabic{footnote}} 	
\newtheorem{theo}{Theorem}
\newtheorem{rem}{Remark}
\newtheorem{conj}{Conjecture}
\newtheorem{definition}{Definition}
\newtheorem{prop}{Proposition}
\newtheorem{lem}{Lemma}
\newtheorem{cor}{Corollary}
\newtheorem{remark}{Remark}
\newtheorem{example}{Example}

\newcommand{\pw}{PW_{\Omega}}
\newcommand{\kz}{k\in\mathbb{Z}}
\newcommand{\akz}{\forall k \in \mathbb{Z}}
\newcommand{\tu}{\mathcal{T}_u}
\newcommand{\db}{\bar{\delta}}
\newcommand{\Z}{\mathbb{Z}}
\newcommand{\N}{\mathbb{N}}
\newcommand{\R}{\mathbb{R}}
\newcommand{\LR}{L^2(\mathbb{R})}
\newcommand{\LO}{L^2([-\Omega,\Omega])}
\newcommand{\re}{\mathbb{R}}
\newcommand{\co}{\mathbb{C}}
\newcommand{\cH}{\mathcal{H}}
\newcommand{\Tu}{\mathcal{T}_u}
\newcommand{\F}{\mathcal{F}}
\newcommand{\bs}{\boldsymbol}
\newcommand{\eq}{\triangleq}
\newcommand{\cond}{S}
\newcommand{\tM}{t_{\mathsf{M}}}
\newcommand{\KN}{\mathcal{K}_N}
\newcommand{\Sz}{\mathbb{S}_N}
\newcommand{\s}{ISI}
\newcommand{\MO}[0]{\mathscr{M}_\lambda}
\newcommand{\fig}[1]{Fig.~\ref{#1}}
\newcommand{\MOh}[0]{\mathscr{M}_{\boldsymbol{\mathsf{H}}}}
\newcommand{\IFint}[0]{\mathcal{L}_k}
\newcommand{\IF}[0]{\mathrm{ASDM}_\theta}
\newcommand{\EQc}[1]		{\stackrel{(\ref{#1})}{=}}
\newcommand{\setsep}{\ \big\vert\ }
\newcommand{\fe}[1]{\left[\kern-0.30em\left[#1  \right]\kern-0.30em\right]}

\newcommand{\w}[1]{\widetilde{#1}}

\newcommand{\Ln}{\mathcal{L}_n}

\newcommand{\bLn}{\overline{\mathcal{L}}_n}

\newcommand{\LN}[1]{\mathcal{L}_{#1}}
\newcommand{\etab}{\eta_\delta}

\newcommand{\fpM}{f'_{\mathsf{M}}}
\newcommand{\fppM}{f''_{\mathsf{M}}}
\newcommand{\fpm}{f'_{\mathsf{m}}}
\newcommand{\fpMb}{\varphi'_{\mathsf{M},\delta}}
\newcommand{\fpmb}{\varphi'_{\mathsf{m},\delta}}
\newcommand{\fppMb}{\varphi''_{\mathsf{M},\delta}}

\newcommand{\fpMs}[1]{f_{\mathsf{M},#1}'}
\newcommand{\fpms}[1]{f_{\mathsf{m},#1}'}

\newcommand{\dtM}{\Delta t_{\mathsf{M}}}
\newcommand{\dtm}{\Delta t_{\mathsf{m}}}
\newcommand{\aM}{a_{\mathsf{M}}}

\newcommand{\FD}[3]{\mathscr{D}_{#1}^{#2}\left[#3\right]}
\newcommand{\NFD}{\mathscr{D}}
\newcommand{\syntharg}[1]{\mathcal{S}_\Omega\sqb{{#1}}}
\newcommand{\synth}{\mathcal{S}_\Omega}

\newcommand{\locav}{\mathcal{L}}

\newcommand{\diff}{d}

\newcommand{\FS}[3]{\mathscr{F}\rb{{#1}_{k}^{#3},#2_{k}}}

\newcommand{\vb}[1]{\left\lvert #1 \right\rvert}
\newcommand{\rb}[1]{\left( #1 \right)}
\newcommand{\sqb}[1]{\left[ #1 \right]}
\newcommand{\cb}[1]{\left\lbrace #1 \right\rbrace}
\newcommand{\floor}[1]{\left\lfloor #1 \right\rfloor}
\newcommand{\ceil}[1]{\left\lceil #1 \right\rceil}

\newcommand\ab[1]			{{\color{blue}#1}}

\newcommand{\df}[1]{{\color{red} #1}}

\newcommand\abb[1]			{{\color{red}#1}}

\newcommand\ETP[1]         {\mathbb{E}_{p}\left(#1\right)}

\newcommand\ETPP[2]         {\mathbb{E}_{#1}\left(#2\right)}

\renewcommand\tilde{\widetilde}

\DeclarePairedDelimiter{\norm}{\Vert}{\Vert}
\DeclarePairedDelimiter{\abs}{\left|}{\right|}
\DeclarePairedDelimiter{\Prod}{\langle}{\rangle}

\newcommand{\PW}[1]{\mathsf{PW}_{#1}}

\newcommand{\M}[3]{\vb{\mu_{#1}^{#2}\sqb{#3}}}
\newcommand{\B}[3]{\vb{\beta_{#1}^{#2}\sqb{#3}}}

\newcommand{\Tm}{T_{\mathsf{m}}}
\newcommand{\TM}{T_{\mathsf{M}}}

\newcommand{\wTm}{{T}_{\mathsf{m}}}
\newcommand{\wTM}{{T}_{\mathsf{M}}}

\newcommand{\iPW}[2]{#1 \in \mathsf{PW}_{#2}}

\renewcommand\geq\geqslant \renewcommand\leq\leqslant
\newcommand{\const}[1]{#1}

\def\ind{1}

\def\th{\Psi}

\def\figmode    {1}      
\def\PH         {0}        
\def\stabilization	{0}
\def\True		{1}

\def\TR                 {1}
\def\FL                 {0}

\def\BoundDerInAppendix {1}
\def\NoisyRecInAppendix {1}

\title{A Generalized Approach for Recovering Time Encoded Signals with Finite Rate of Innovation}

\author{Dorian~Florescu
	
\thanks{This work is supported by the UK Research and Innovation (UKRI) council's \emph{Future Leaders Fellowship} program ``Sensing Beyond Barriers'' (MRC Fellowship award no.~MR/S034897/1).}	
\thanks{D.~Florescu is with the Department of Electrical and Electronic Engineering, Imperial College London, SW72AZ, UK. E-mail: D.Florescu@imperial.ac.uk or fdorian88@gmail.com.}
\thanks{Manuscript received Feb. XX, 20XX revised August XX, 20XX.}
}

\markboth{Journal of \LaTeX\ Class Files,~Vol.~XX, No.~X, August~20XX}
{Shell \MakeLowercase{\textit{et al.}}: Bare Demo of IEEEtran.cls for IEEE Journals}

\maketitle

\begin{abstract}
In this paper, we consider the problem of recovering a sum of filtered Diracs, representing an input with finite rate of innovation (FRI), from its corresponding time encoding machine (TEM) measurements. So far, the recovery was  guaranteed for cases where the filter is selected from a number of particular mathematical functions. Here, we introduce a new generalized method for recovering FRI signals from the TEM output. On the theoretical front, we significantly increase the class of filters for which reconstruction is guaranteed, and provide a condition for perfect input recovery depending on the first two local derivatives of the filter. We extend this result with reconstruction guarantees in the case of noise corrupted FRI signals. On the practical front, in cases where the filter has an unknown mathematical function, the proposed method streamlines the recovery process by bypassing the filter modelling stage. We validate the proposed method via numerical simulations with filters previously used in the literature, as well as filters that are not compatible with the existing results.  Additionally, we validate the results using a TEM hardware implementation.
\end{abstract}

\begin{IEEEkeywords} 
Event-driven, nonuniform sampling, analog-to-digital conversion, time encoding, finite rate of innovation.
\end{IEEEkeywords}

\IEEEpeerreviewmaketitle

\section{Introduction}
\label{sec:intro}

\IEEEPARstart{S}{hannon's} iconic work on reconstructing bandlimited signals from uniform samples \cite{Shannon:1949:C} was generalized on multiple levels. A notable generalization is from bandlimited signals to signals belonging to shift-invariant spaces (SIS), which enables reconstructing a linear combination of uniformly spaced filters $g(t)=\sum_{k\in\Z} a_k \varphi(t-kT)$ from uniform samples \cite{Unser:2000:C} and nonuniform samples \cite{Aldroubi:2001:J}. An important advantage of SIS is that the theory is backward compatible with previous theory when $\varphi(t)$ is a sinc function, or a B-spline, but also enables choosing new functions $\varphi(t)$ satisfying some predefined requirements \cite{Unser:2000:C}. The fact that the SIS filters are uniformly spaced can be a strong constraint when the input is a sparse sequence of filters centered in real-values $\cb{\tau_k}_{k=1}^K$:
\begin{equation}
\label{eq:FRI_signal}
    g(t)=\sum\nolimits_{k=1}^K a_k\varphi(t-\tau_k), \quad t\in\sqb{0,\tM}.
\end{equation}
Such signals can no longer be considered part of a SIS. This recovery problem has twice the number of unknowns as before, requiring to compute $\cb{\tau_k,a_k}_{k=1}^K$ using measurements of $g(t)$. We call this problem finite-rate-of-innovation (FRI) sampling \cite{Vetterli:2002}.
The versatility of FRI sampling allowed its application in a large number of areas such as ECG acquisition and compression \cite{Baechler:2017:J}, radioastronomy \cite{Pan:2016:J}, image processing \cite{Wei:2016:J}, ultrasound imaging \cite{Tur:2011:J}, calcium imaging \cite{Onativia:2013:J}, or the Unlimited Sensing Framework \cite{Bhandari:2020:J,Bhandari:2021:Ja,Florescu:2021:J,Florescu:2022:Jb}.

In this paper we consider the problem of recovering \eqref{eq:FRI_signal} from Time Encoding Machine (TEM) measurements, which represents a different generalization of Shannon's sampling inspired from the information processing in the brain, characterized by low power consumption \cite{Lazar:2004:Jc}. A TEM with input $g(t)$ is an operator $\mathcal{T}$ defined as $\mathcal{T}g=\cb{t_k}_{k\in\mathbb{Z}}$, where $\cb{t_k}$ is a strictly increasing sequence of time samples known as spikes or trigger times. Input recovery was demonstrated for the case when $g(t)$ is a bandlimited function \cite{Lazar:2004:Jc,Florescu:2015:J}, a function in a shift-invariant space \cite{Gontier:2014:J,Florescu:2015:J}, or a bandlimited function with jump discontinuities \cite{Florescu:2021:C,Florescu:2022:Jb}.

\bpara{Related Work.} The work on FRI signal recovery from TEM measurements started with \cite{Alexandru:2019:J} and is still at an early stage. Furthermore, currently, the input recovery guarantees assume that the FRI filters $\varphi(t)$ are particular mathematical functions such as polynomial or exponential splines \cite{Alexandru:2019:J}, hyperbolic secant kernels \cite{Hilton:2023:C} or alpha synaptic functions \cite{Hilton:2021:Cb}. The case of periodic FRI signals was considered in \cite{Rudresh:2020:C,Namman:2021:J,Kalra:2022:C,Kamath:2023:C}. We note that this line of work assumes that the TEM input is bandlimited and uses Nyquist rate type conditions for recovery. A common scenario is that $\varphi(t)$ is not chosen by design, but rather results from the physical properties of the acquisition device \cite{Baechler:2017:J}. In such cases, the existing work on FRI signal recovery for TEMs does not offer any guarantees. Moreover, the existing work is based on exact analytical tools for recovery, such as Prony's method, which are known to not allow good stability to noise or model mismatch. The study of general filters $\varphi(t)$ was first numerically validated in \cite{Florescu:2023:Ca}.

\bpara{Contributions.} The contributions are as follows.
\begin{enumerate}
    \item We introduce a new general FRI signal recovery method from TEM samples that is compatible with the setups of the existing methods.
    \item We show theoretically that, in its full generality, the proposed method perfectly recovers $g(t)$ from the TEM output for significantly reduced constraints on $\varphi(t)$. We show that the constraints are achievable for high enough sampling rates.
    \item We introduce noise robustness recovery guarantees.
    \item We validate the new method using various filters including a randomly generated filter.
    \item We test our method using TEM hardware measurements.
\end{enumerate}

\bpara{Organization.} The paper is organized as follows. Section \ref{sect:tem} gives the preliminaries of TEMs. Section \ref{sect:FRI_recovery} introduces the sampling setup and summarises the existing recovery methods. The proposed method is in Section \ref{sect:prop_method}. The noisy scenario is considered in Section \ref{sect:noise}. Section \ref{sect:numerical_study} presents numerical and hardware experiments. The proofs to some of the results are in Section \ref{sect:proofs}, and the conclusions are in Section \ref{sect:conclusions}.

\begin{table}[!h]
\caption{Frequently Used Symbols}
\rowcolors{2}{blue!4}{white}
\renewcommand{\arraystretch}{1}
\begin{center}
\begin{tabular}{m{1.4cm}	 || m{6.7cm} l}
	Symbol & Definition \\ \hline  \hline  
	$\varphi(t)$ & Kernel for FRI input, with finite time support $[-L,\infty)$.\\
        $\cb{\tau_k,a_k}_{k=1}^K$ & Time locations and amplitudes of the FRI input.\\
        $\varepsilon_\tau, \varepsilon_a$ & Minimum separation and amplitude of input pulses.\\
        $f(t)$ & Filter $\varphi(t)$ translated in $\tau_1$, i.e., $f(t)=\varphi(t-\tau_1)$\\
        $g(t), g_\infty$ & FRI input and an upper bound  $g_\infty\geq\vb{g(t)}$.\\
        $\widehat{g}(t)$ & The input estimation for $t\in\sqb{0,t_{\mathsf{m}}}$.\\
        $\delta,b$ & TEM threshold and bias parameters, respectively.\\
        $\cb{t_k}_{k=1}^N$ & Output switching times of the TEM.\\
        $\Tm,\TM$ & Min/max of distance between $t_k$: $\Tm\leq\Delta t_k \leq \TM$.\\
        $n_k$ & Index of $t_{n_k}$ located right after the onset of the $k$th filter.\\
        $t_\delta$ & Max distance between the onset of the $k$th filter and $t_{n_k+1}$.\\        
        $\fpm,\fpM,\fppM$ & Min/max of $f'(t)$ and $f''(t)$ for $t\in\sqb{t_{n_k+1},t_{n_k+3}}$.\\
        $\fpmb,\fpMb$ & Min/max of $\varphi'(t)$ for $t\in\sqb{-L+\Tm/2,-L+4\TM}$.\\
        $\varphi_-'(t), \varphi_+'(t)$ & Left and right derivatives of $\varphi(t)$.\\
        $e_{\tau}, e_a$ & Error bounds for estimating $\tau_k$ and $a_k$, respectively.\\
        $\mathcal{L}_n$, $I_n(\tau)$ & $\mathcal{L}_n g=\int_{t_n}^{t_{n+1}} g(s)ds$, $I_n(\tau)=\mathcal{L}_n \varphi(\cdot-\tau)$.\\
        $\epsilon$ & Defines the slope (derivative) of ${\frac{I_{n+2}(\tau)}{I_{n+1}(\tau)}}$.\\
        $\norm{f}_\infty$ & The absolute norm $\norm{f}_\infty=\sup_{t\in\R}\vb{f(t)}$.\\
        $C(S)$ & The set of continuous functions on set $S$.\\ 
        $C^N(S)$ & The set of $N$th order differentiable functions on set $S$.\\ 	

\end{tabular}
\end{center}
\label{tab:my-table}
\end{table}

\section{The Time Encoding Machine}
\label{sect:tem}

We consider two classes of TEMs: the Asynchronous Sigma-Delta Modulator (ASDM) and the integrate-and-fire (IF). The ASDM is characterised by low power consumption and modular design \cite{Ozols:2015:C}, comprising a loop with an adder, integrator, and a noninverting Schmitt trigger, as depicted in \fig{fig:asdm}(a).
\begin{figure}[!t]
	\includegraphics[trim={0cm 0cm 0cm 0},clip,width=9cm]{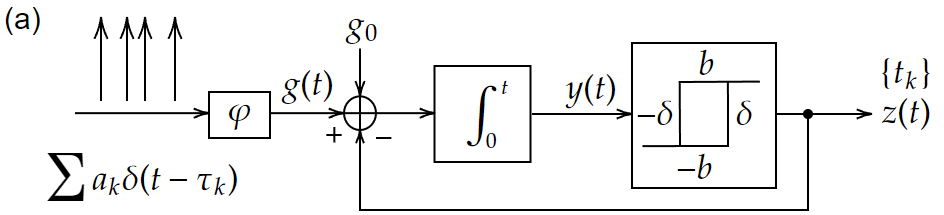}
    \includegraphics[trim={0cm 0cm 0cm 0},clip,width=9cm]{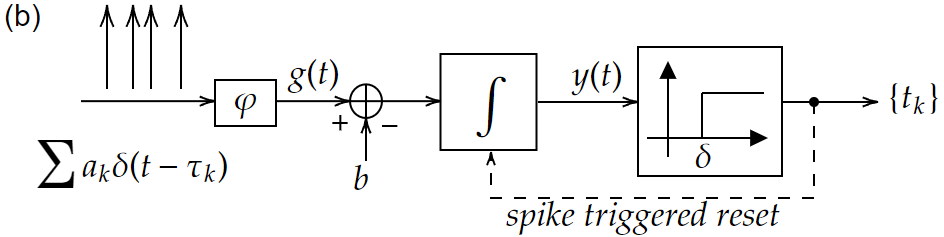}
	\caption{The TEMs considered in this work: (a) The asynchronous sigma-delta modulator (ASDM). (b) The integrate-and-fire (IF) model.}
	\label{fig:asdm}
\end{figure}
The initial conditions are $z(0)=-b$ and $y(0)=0$, where $b$ is a positive constant. We assume that $\vb{g(t)+g_0}<b$, which ensures that $y(t)$ is strictly increasing in the immediate positive vicinity of $t=0$. This means that eventually $y(t)=\delta$, and this time point represents the first ASDM output sample  $t=t_1$. This determines the ASDM output to change to $z(t)=b$, which, in turn, ensures that $y(t)$ is strictly decreasing for $t>t_1$. Eventually $y(t)=-\delta$ for $t=t_2$, $z(t)$ toggles back to $-b$, and the process continues recursively. The output sequence $\cb{t_n}_{n\geq1}$ satisfies the \emph{t-transform} equations\cite{Lazar:2004:Jc}
\begin{equation}
\Ln g= \rb{-1}^n\sqb{2\delta - b\Delta t_{n}}-g_0 \Delta t_{n},\quad n\in\Z_+^*,
\label{eq:ttransform}
\end{equation}
where $\Delta t_n\triangleq t_{n+1}-t_n$,  $\mathcal{L}_ng\triangleq \int_{t_n}^{t_{n+1}} g\rb{s} $, $\delta$ and $b$ are the threshold and amplitude of the Schmitt trigger output, respectively. Parameter $g_0$ represents a bias that is typically considered $0$ in simulations, but plays a role in explaining hardware measurements \cite{Florescu:2022:C}. 

The IF TEM is inspired from neuroscience, previously used to fit models of biological neurons \cite{Florescu:2018:J}, but also used in machine learning \cite{Maass:2002:J,Florescu:2019:J} or recovery of FRI signals \cite{Alexandru:2019:J,Hilton:2023:C}. The functioning principle of the IF, depicted in \fig{fig:asdm}(b), is as follows. The input $g(t)$, added with bias parameter $b$, is integrated, which results in strictly increasing function $y(t)$. Each time $y(t)$ crosses threshold $\delta$, the integrator is reset and the IF generates an output spike time $t_k$. The IF TEM is described by the following equations
\begin{equation}
    \Ln g=\delta-b\Delta t_n,\quad n\in\Z_+^\ast.
    \label{eq:ttransformIF}
\end{equation}

For both the ASDM and IF the assumption is that $\vb{g(t)}\leq g_\infty <b$, leading to the following sampling density bounds 
\begin{equation}
\label{eq:bound_isi}
    T_{\mathsf{m}}\leq \Delta t_n \leq T_{\mathsf{M}},
\end{equation}
where $T_{\mathsf{m}}\triangleq\frac{2\delta}{b+g_\infty}$, $T_{\mathsf{M}}\triangleq\frac{2\delta}{b-g_\infty}$ for the ASDM and $T_{\mathsf{m}}\triangleq\frac{\delta}{b+g_\infty}$, $T_{\mathsf{M}}\triangleq\frac{\delta}{b-g_\infty}$ in the case of the IF \cite{Lazar:2004:Jc}.

\vspace{2em}
\section{Recovery of FRI Signals from TEM Samples}
\label{sect:FRI_recovery}
\subsection{Proposed Sampling Setup}

Let  $g(t)$ belong to the input space spanned by \eqref{eq:FRI_signal},
where $\cb{a_k,\tau_k}_{k=1}^K$ are unknown values satisfying $\varepsilon_a<\vb{a_k}<g_\infty, 0<\tau_k<\tau_{k+1}<\tM, \tau_{k+1}-\tau_k>\varepsilon_\tau$, $\varepsilon_a,\varepsilon_\tau,g_\infty,\tM>0$ and $K$ is the unknown number of pulses with shape $\varphi(t)$. We also assume that $\varphi(t)$ is known, $\varphi\in  C^2(-L,0){\cap C\rb{\R}}$, and that $ \mathrm{supp} \rb{\varphi} \subseteq [-L,\infty)$.
In other words, $\varphi(t)$ has finite support, is second order differentiable on $(-L,0)$ and continuous on $\R$. Furthermore, we assume that the left derivative $\varphi_-'(t)$ and right derivative $\varphi_+'(t)$ exist and are bounded for $t\in\R$. We also assume that $\varphi'(t)>0, t\in(-L,0)$.
Furthermore, we assume that $\max_t\vb{\varphi(t)}=1$, which does not reduce generality. The conditions on $\varphi(t)$ are defining a space of functions that are relatively common, including the previously studied cases of polynomial and exponential splines \cite{Alexandru:2019:J} and alpha synaptic activation functions \cite{Hilton:2021:Cb}. The hyperbolic secant kernel \cite{Hilton:2023:C} does not fully satisfy the conditions as its support is $\R$, but our analysis is also applicable given its fast decay to $0$ (see Section \ref{sect:numerical_study}).

Furthermore, we assume $\max_t\vb{g(t)}\leq g_\infty$, $\varepsilon_{\tau}<L$, $\tau_1\geq L$, and that $g(t)$ is sampled with an ASDM or IF TEM over the finite time interval $\sqb{0,\tM}$ to yield output time encoded samples $\cb{t_n}_{n=1}^N$. Both TEMs enable computing $\cb{\Ln g}_{n=1}^{N-1}$ from $\cb{t_n}_{n=1}^N$ via \eqref{eq:ttransform}, \eqref{eq:ttransformIF}, respectively. The problem we propose is to recover $\cb{a_k,\tau_k}_{k=1}^K$ from $\cb{t_n}_{n=1}^N$, i.e., to compute $\cb{a_k,\tau_k}_{k=1}^K$ from $\cb{\Ln g}_{n=1}^{N-1}$, which is independent of the particular TEM model used.  We assume that $\varepsilon_\tau>4\TM$, which ensures there are at least $4$ TEM samples in between each two consecutive pulses via \eqref{eq:bound_isi}. 

\subsection{Existing Recovery Methods}
\label{sect:prev_methods}

The work in \cite{Alexandru:2019:J} considers the estimation of $\cb{\tau_k,a_k}$ from $\cb{t_n}$ for an IF TEM where the filter $\varphi(t)$ is a polynomial or exponential spline (E-spline), compactly supported with support length $L$. It is assumed that the pulses have no overlaps, i.e., $\tau_{k+1}-\tau_k>L$. Moreover, for identifying pulse $a_k \varphi\rb{t-\tau_k}$, three spike times $\cb{t_{n+i}}_{i=0}^2$ are used, which are assumed to be located in an interval of length $L/2$ at the onset of the pulse. Furthermore, it is assumed that 
\begin{equation}
    \sum_{n=1}^2 c_{n,m}\rb{\varphi\ast\ind_{[0,t_{n+1}-t_n]}(t-\tau_n)}=e^{\jmath \omega_m t}, \quad m\in\cb{0,1},
\end{equation}
has exact analytical solutions $c_{n,m}$, where $\omega_0,\omega_1$ are parameters of the E-spline. The values of $\tau_k$ and $a_k$ are then found by computing the signal moments $s_m=\sum_{n=1}^2 c_{n,m} y(t_{n+1})$ and then solving $s_m=a_k e^{\jmath \omega_m \tau_k}$ via Prony's method. These results are extended for inputs generated with polynomial splines and piece-wise constant signals. When the pulses overlap, i.e., $\tau_{k+1}-\tau_k<L$, recovery is still possible if samples of multiple TEMs are recorded \cite{Alexandru:2019:J}. 
For a filter $\varphi(t)=\mathrm{sech}^2(t)$  \cite{Hilton:2023:C}
\begin{equation}
\label{eq:sech}
    \Ln g=\frac{e^{t_n}P\rb{e^{2t_n}}}{Q\rb{e^{2t_n}}},\quad Q(x)=\sum_{k=1}^K\rb{1+e^{-2\tau_k}x},
\end{equation}
where $P(x)$ and $Q(x)$ are polynomials. The values $\cb{\tau_k,a_k}$ can then be uniquely recovered by solving analytically \eqref{eq:sech} via \eqref{eq:ttransformIF}, where the recovery approach is inspired from the recovery of FRI signals from nonuniform samples \cite{Wei:2012:J}. Moreover, a different analytical recovery was shown for $\varphi(t)=te^{-t}\cdot\ind_{[0,\infty)}(t)$. We note that the functions $\varphi(t)$ above are piecewise elementary functions. Elementary functions represent a small subset of all continuous functions \cite{Geddes:1992:B}. In fact, in practice, the impulse response of a filter rarely fits perfectly a mathematical expression, as it often results from the physical properties of a given acquisition device \cite{Shukla:2007:J}. Therefore, using the methods above may introduce an additional error source due to model mismatch. Thus, introducing a new method allowing perfect recovery for general filters $\varphi(t)$ could tackle significantly wider scenarios and applications.

A separate line of work considers the case where $g(t)$ is a periodic bandlimited signal, and the TEM sampling rate satisfies a Nyquist rate type recovery guarantee, leading to  parametric input recovery approaches \cite{Rudresh:2020:C,Namman:2021:J,Kamath:2023:C,Kalra:2022:C}. In this work, we consider the general scenario where the input to the TEM is apriodic and not necessarily bandlimited.

\vspace{2em}
\section{The Proposed Recovery Method}
\label{sect:prop_method}

As discussed in Section \ref{sect:prev_methods}, assuming that $\varphi(t)$ is an E-spline, the method in \cite{Alexandru:2019:J} requires three consecutive TEM samples $\cb{t_{n+i}}_{i=0}^2$ located at the onset of the pulse to be estimated, which amounts to two consecutive integrals $\cb{\LN{n+i} g}_{i=0}^1$. Here we will show that this information is enough to recover the pulse when $\varphi(t)$ satisfies much more relaxed assumptions, which don't require that the pulse is generated with any particular mathematical function (e.g. exponential, or polynomial). 

\subsection{The Case of One Pulse}

We first assume that $K=1$, and subsequently extend to $K\geq 1$. Let $n_1\in\Z$ be the index of the TEM output located right after the onset of filter $\varphi(t-\tau_1)$, defined as
\begin{equation}
\label{eq:n1}
n_1\triangleq\min_{n\in\cb{1,\dots,N}} \cb{n \setsep t_n>\tau_1-L}.    
\end{equation}
Our recovery makes use of the TEM output samples $t_{n_1+i}, i\in\cb{1,2,3}$, which, assuming $4\TM<L$, satisfy
\begin{equation}
    0<t_{n_1-1}\leq\tau_1-L<t_{n_1+i}<\tau_1,\quad i\in\cb{0,\dots,3}.
    \label{eq:four_spikes}
\end{equation} 
Let $I_n(\tau)\triangleq \int_{t_n}^{t_{n+1}} \varphi(t-\tau) dt $. The idea behind the recovery is to compute the ratio of two consecutive integrals
\begin{equation}    \frac{\mathcal{L}_{n_1+2}g}{\mathcal{L}_{n_1+1}g}=\frac{a_1 I_{n_1+2}(\tau_1)}{a_1 I_{n_1+1}(\tau_1)}=\frac{I_{n_1+2}(\tau_1)}{I_{n_1+1}(\tau_1)},
\label{eq:tau1_computation}
\end{equation}
which is not a function of $a_1$, but only $\tau_1$. If  $\frac{I_{n_1+2}(\tau)}{I_{n_1+1}(\tau)}$ is strictly increasing, then $\tau_1$ can be uniquely estimated from \eqref{eq:tau1_computation}. This recovery approach is illustrated in \fig{fig:rec_diagram}. 
\begin{figure}[!t]
	\centerline{\includegraphics[trim={0cm 0cm 0cm 0},clip,width=9cm]{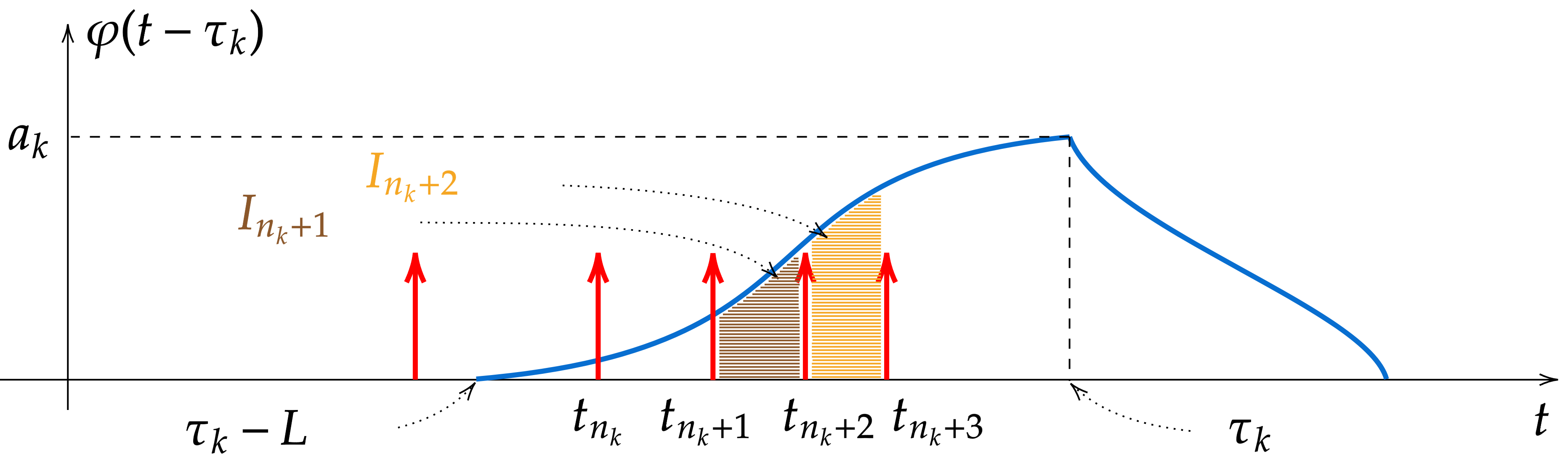}}
	\caption{The recovery of the $k$th pulse from the FRI input. The integrals $I_{n_k+1}$ and $I_{n_k+2}$, computed from the TEM output $\cb{t_{n_k+i}}_{i=0}^3$, contain all the information needed to uniquely identify $\tau_k$ and $a_k$.}
	\label{fig:rec_diagram}
\end{figure} 
The following lemma derives conditions to guarantee the required monotonicity.
\begin{lem}
\label{lem:monotonic}
Function $\frac{I_{n_1+2}(\tau)}{I_{n_1+1}(\tau)}$ is finite, differentiable, and
\begin{equation}
\label{eq:monotonicity}
\begin{gathered}
\rb{\frac{I_{n_1+2}(\tau)}{I_{n_1+1}(\tau)}}'\geq \frac{{\fpmb}^3}{2\fpMb}\frac{\Tm^3}{\TM^2}\cdot\bar{\epsilon}_{\mathsf{m}},\\
\bar{\epsilon}_{\mathsf{m}}\triangleq2+\frac{\Tm}{\TM}-\frac{2\varphi\rb{t_\delta}}{{\fpmb}^2}\cdot\fppMb\cdot\frac{2\TM}{\Tm}-\frac{{\fpMb}^2}{{\fpmb}^2},
\end{gathered}
\end{equation}
if $4\TM<L$, where $t_\delta=-L+2\TM, \fpMb\triangleq\max_{t\in\mathbb{S}_{\delta} }\varphi'(t), \fpmb\triangleq\min_{t\in\mathbb{S}_{\delta}} \varphi'(t), \fppMb\triangleq \max_{t\in\mathbb{S}_{\delta}} \varphi''(t)$, and  $\mathbb{S}_{\delta}\triangleq \sqb{-L+\tfrac{\Tm}{2},-L+4\TM}$.
\end{lem}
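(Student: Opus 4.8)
The plan is to treat $R(\tau)\triangleq I_{n_1+2}(\tau)/I_{n_1+1}(\tau)$ as a quotient, differentiate it in closed form, and then bound numerator and denominator separately. Throughout I would abbreviate the shifted spike locations $\alpha=t_{n_1+1}-\tau$, $\beta=t_{n_1+2}-\tau$, $\gamma=t_{n_1+3}-\tau$, and set $A=I_{n_1+1}(\tau)=\int_\alpha^\beta\varphi(u)\,du$, $B=I_{n_1+2}(\tau)=\int_\beta^\gamma\varphi(u)\,du$, $h_1=\beta-\alpha$, $h_2=\gamma-\beta$. First I would record the geometric facts needed: since $\mathrm{supp}(\varphi)\subseteq[-L,\infty)$, $\varphi\in C(\R)$ and $\varphi'>0$ on $(-L,0)$, continuity forces $\varphi(-L)=0$ and hence $\varphi>0$ on $(-L,0)$; and the four-spike bracket \eqref{eq:four_spikes} together with the density bounds \eqref{eq:bound_isi} place $\alpha,\beta,\gamma$ inside $\mathbb{S}_{\delta}$ and yield $\alpha\le t_\delta$. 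Consequently $\fpmb\le\varphi'\le\fpMb$ and $\varphi''\le\fppMb$ on the integration window, $\varphi(\alpha)\le\varphi(t_\delta)$, and $\Tm\le h_1,h_2\le\TM$.

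Finiteness and differentiability then come essentially for free: positivity of the integrand makes $A>0$, so $R$ is finite; and since $\varphi\in C^1$ there with bounded derivative, I may differentiate $I_n(\tau)=\int_{t_n}^{t_{n+1}}\varphi(t-\tau)\,dt$ under the integral sign to obtain the boundary-value formula $I_n'(\tau)=\varphi(t_n-\tau)-\varphi(t_{n+1}-\tau)$. In particular $I_{n_1+1}'=-\Delta_1$ and $I_{n_1+2}'=-\Delta_2$ with $\Delta_1=\varphi(\beta)-\varphi(\alpha)>0$ and $\Delta_2=\varphi(\gamma)-\varphi(\beta)>0$. The quotient rule gives
\begin{equation}
R'(\tau)=\frac{\Delta_1 B-\Delta_2 A}{A^2}\triangleq\frac{N}{A^2},\nonumber
\end{equation}
so the claim reduces to a lower bound on $N$ and an upper bound on $A$.

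For the denominator I would use the normalization $\max_t|\varphi|=1$, which gives $A=\int_\alpha^\beta\varphi\le h_1\le\TM$, hence $A^2\le\TM^2$; this is the source of the factor $\TM^{-2}$ in \eqref{eq:monotonicity}. The substance is the lower bound on $N$. Writing $\varphi(u)=\varphi(\alpha)+\psi(u)$ with $\psi(u)=\int_\alpha^u\varphi'\ge0$ splits the numerator into a baseline term and a remainder,
\begin{equation}
N=\varphi(\alpha)\bigl(\Delta_1 h_2-\Delta_2 h_1\bigr)+\Bigl(\Delta_1\!\int_\beta^\gamma\!\psi-\Delta_2\!\int_\alpha^\beta\!\psi\Bigr).\nonumber
\end{equation}
The remainder is the genuinely positive part: using $\psi(u)\ge\fpmb(u-\alpha)$ and $\Delta_1\ge\fpmb h_1$ on the one hand, and $\psi(u)\le\fpMb(u-\alpha)$ and $\Delta_2\le\fpMb h_2$ on the other, it is bounded below by a contribution of order $\fpmb^2$ minus an $\fpMb^2$ correction, which supplies the $2+\Tm/\TM$ and $-\fpMb^2/\fpmb^2$ pieces of $\bar{\epsilon}_{\mathsf{m}}$. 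The baseline factor is sign-indefinite and must be controlled through curvature: a second-order expansion gives $\Delta_1 h_2-\Delta_2 h_1=-h_2\int_\alpha^\beta\varphi''(r)(r-\alpha)\,dr-h_1\int_\beta^\gamma\varphi''(r)(\gamma-r)\,dr\ge-\tfrac12\fppMb h_1 h_2(h_1+h_2)$, and multiplying by $\varphi(\alpha)\le\varphi(t_\delta)$ produces the remaining negative term of $\bar{\epsilon}_{\mathsf{m}}$. Inserting the spacing bounds $\Tm\le h_i\le\TM$ (and a factor $\fpmb/\fpMb\le1$ absorbed to keep the prefactor clean) and dividing by $A^2\le\TM^2$ assembles exactly \eqref{eq:monotonicity}.

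The main obstacle is the baseline term $\varphi(\alpha)(\Delta_1 h_2-\Delta_2 h_1)$. A naive attempt to prove $N>0$ by comparing the ratios $\Delta_1/\Delta_2$ and $A/B$ fails, because for a pure exponential kernel these ratios coincide, $R$ is exactly constant, and $\tau$ is unrecoverable; positivity therefore cannot follow from monotonicity of $\varphi$ alone and genuinely requires the second-derivative bound. The delicate point is to show that the $\fpmb^2$-order positive remainder dominates the curvature-driven negative baseline, uniformly over the admissible spacings and over $\varphi(\alpha)\le\varphi(t_\delta)$ — and this domination is precisely the condition $\bar{\epsilon}_{\mathsf{m}}>0$. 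Keeping the estimates sharp enough that the leading contributions collapse to the stated prefactor $\tfrac{\fpmb^3}{2\fpMb}\tfrac{\Tm^3}{\TM^2}$, rather than to a cruder bound, is the part demanding the most careful bookkeeping.
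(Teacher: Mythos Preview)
Your plan is sound and the geometric setup (locating $\alpha,\beta,\gamma$ in $\mathbb{S}_\delta$, the bound $\alpha\le t_\delta$, the boundary formula for $I_n'$) is exactly right, but the decomposition you propose is genuinely different from the paper's. The paper does \emph{not} split $\varphi=\varphi(\alpha)+\psi$ and separate a baseline from a remainder; instead it normalises the numerator by the two increments, writing
\[
\frac{N}{\Delta_1\Delta_2}=\frac{B}{\Delta_2}-\frac{A}{\Delta_1}=\frac{\LN{n_1+2}f}{\Delta f_{n_1+2}}-\frac{\LN{n_1+1}f}{\Delta f_{n_1+1}},
\]
then bounds each ratio via a first-order Taylor expansion of $f$ at the left endpoint (upper bound for the first integral, lower bound for the second). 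After a further rescaling this produces a quantity $\bar\epsilon$ for which the four terms of $\bar{\epsilon}_{\mathsf m}$ drop out directly; the curvature term arises not from your second-order integral identity for $\Delta_1 h_2-\Delta_2 h_1$ but from bounding $\fpM-\fpm$ by the mean value theorem applied to $f'$. The stated prefactor $\tfrac{\fpmb^3}{2\fpMb}\tfrac{\Tm^3}{\TM^2}$ then appears cleanly when one undoes the normalisation $\bar\epsilon\mapsto\epsilon$, using $\Delta f_{n_1+i}\ge\fpmb\Tm$ and $I_{n_1+1}\le\TM$.

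Your route can be pushed through: working out your remainder term one finds a bracket $2+h_2/h_1-(\fpMb/\fpmb)^2-\varphi(t_\delta)\fppMb(1+h_2/h_1)/\fpmb^2$, which is in fact \emph{tighter} than $\bar{\epsilon}_{\mathsf m}$ (the curvature coefficient $1+\TM/\Tm$ versus $4\TM/\Tm$), and your prefactor $\tfrac{\fpmb^2}{2}\tfrac{T_{\mathsf m}^3}{\TM^2}$ is larger than the stated one by the factor $\fpMb/\fpmb$ you propose to ``absorb''. So the lemma as stated follows from your argument as well, though only after that deliberate weakening --- which is slightly unsatisfying, since the prefactor you obtain is actually better. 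What the paper's normalisation buys is that the exact constants in \eqref{eq:monotonicity} fall out of the scaling without any post-hoc adjustment; what your decomposition buys is a more transparent separation of the positive ``slope'' contribution from the negative ``curvature'' contribution, and a marginally sharper final estimate.
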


\if\BoundDerInAppendix\FL
\begin{proof}
The filter satisfies $\varphi(t)=0,t<-L$ and, given that $\varphi'(t)>0, t\in (-L,0)$, it follows that $\varphi(t)>0, t\in (-L,0)$ and thus $\varphi(t-\tau)>0, t\in (-L+\tau,\tau)$. 
Given that $4\TM<L$, we get \eqref{eq:four_spikes}, and therefore  $I_{n_1+1}(\tau)>0$ and thus $\frac{I_{n_1+2}(\tau)}{I_{n_1+1}(\tau)}$ is well-defined. Moreover, the ratio is the composition of differentiable functions, therefore it is itself differentiable. 

For simplicity, let $f(t)\triangleq \varphi(t-\tau), t \in\sqb{\tau-L,\tau}, f_l=f\rb{t_l}, l\in\cb{n_1+1,n_1+2,n_1+3}$. The following holds
\begin{align*}    I_{n_1+i}'(\tau)&=\varphi\rb{t_{n_1+i}-\tau}-\varphi\rb{t_{n_1+i+1}-\tau}=-\Delta f_{n_1+i},
\end{align*}
for $i\in\cb{1,2}$. Using the above, the following holds
\begin{equation}
\label{eq:der1}
\begin{aligned}
    \rb{\frac{I_{n_1+2}(\tau)}{I_{n_1+1}(\tau)}}'= \frac{I_{n_1+2}'(\tau)I_{n_1+1}(\tau)-I_{n_1+2}(\tau)I_{n_1+1}'(\tau)}{I_{n_1+1}^2(\tau)}
    =\frac{\Delta f_{n_1+1} I_{n_1+2}(\tau)-\Delta f_{n_1+2}  I_{n_1+1}(\tau)}{I_{n_1+1}^2(\tau)}\triangleq \epsilon.
\end{aligned}
\end{equation}
The final objective is to show that $\epsilon$ is positive. We proceed by providing subsequent lower bounds for $\epsilon$ as follows.
By rearranging the terms in \eqref{eq:der1} and using that $I_n(\tau)=\Ln f$,
\begin{equation}
    \frac{\epsilon\cdot I_{n_1+1}^2(\tau)}{\Delta f_{n_1+1} \Delta f_{n_1+2}}=\frac{\LN{n_1+2}f}{\Delta f_{n_1+2}}-\frac{\LN{n_1+1}f}{\Delta f_{n_1+1}}.
    \label{eq:der2}
\end{equation}
Function $f(t)$ is positive, differentiable and strictly increasing just as $\varphi(t)$. We next expand $f(t)$ in Taylor series with anchor points $t_{n_1+1}$ and $t_{n_1+2}$, respectively,  $$f(t)=f_{n_1+1}+f'\rb{\xi_{n_1+1}}(t-t_{n_1+1})\leq f_{n_1+1} +\fpM(t-t_{n_1+1}),$$ $$f(s)=f_{n_1+2}+f'\rb{\xi_{n_1+2}}(s-t_{n_1+2})\geq f_{n_1+2} +\fpm(s-t_{n_1+2}),$$ 
for $t\in\sqb{t_{n_1+1},t_{n_1+2}}$ and $s\in\sqb{t_{n_1+2},t_{n_1+3}}$, such that 
\begin{gather*}
t_{n_1+1}\leq \xi_{n_1+1}\leq t\leq t_{n_1+2}, \quad\fpM\triangleq\max_{t\in\sqb{t_{n_1+1},t_{n_1+3}}}f'(t)\\
t_{n_1+2}\leq \xi_{n_1+2}\leq s\leq t_{n_1+3}, \quad\fpm\triangleq\min_{t\in\sqb{t_{n_1+1},t_{n_1+3}}}f'(t).
\end{gather*}
We can thus bound the local integrals of $f(t)$ as 
\begin{equation}
\begin{gathered}
    \LN{n_1+1}f \leq f_{n_1+1}\cdot\Delta t_{n_1+1} + \fpM \frac{\Delta t_{n_1+1}^2}{2},\\
    \LN{n_1+2}f \geq f_{n_1+2}\cdot\Delta t_{n_1+2} + \fpm \frac{\Delta t_{n_1+2}^2}{2}.   
\end{gathered}
\label{eq:der2_2}
\end{equation}
By combining \eqref{eq:der2} and \eqref{eq:der2_2} we get
\begin{align}
\label{eq:der3}
\begin{aligned}
\frac{\epsilon\cdot I_{n_1+1}^2(\tau)}{\Delta f_{n_1+1} \Delta f_{n_1+2}}>f_{n_1+2}\frac{\Delta t_{n_1+2}}{\Delta f_{n_1+2}}+\frac{\fpm}{2}\frac{\Delta t_{n_1+2}^2}{\Delta f_{n_1+2}}
-\rb{f_{n_1+1}\frac{\Delta t_{n_1+1}}{\Delta f_{n_1+1}}+\frac{\fpM}{2}\frac{\Delta t_{n_1+1}^2}{\Delta f_{n_1+1}}}.
\end{aligned}
\end{align}
We then use that $f_{n_1+1},f_{n_1+2}>0$, and thus
\begin{equation}
\label{eq:der4}
\begin{gathered}
    f_{n_1+1}\frac{\Delta t_{n_1+1}}{\Delta f_{n_1+1}}+\frac{\fpM}{2}\frac{\Delta t_{n_1+1}^2}{\Delta f_{n_1+1}}\leq \frac{f_{n_1+1}}{\fpm}+\frac{\fpM\Delta t_{n_1+1}}{2\fpm},\\
    f_{n_1+2}\frac{\Delta t_{n_1+2}}{\Delta f_{n_1+2}}+\frac{\fpm}{2}\frac{\Delta t_{n_1+2}^2}{\Delta f_{n_1+2}}\geq \frac{f_{n_1+2}}{\fpM}+\frac{\fpm\Delta t_{n_1+2}}{2\fpM}.
\end{gathered}
\end{equation}
As before, we plug \eqref{eq:der4} into  \eqref{eq:der3} and get
\begin{equation*}
\frac{\bar{\epsilon}\Delta t_{n_1+1}\fpm}{2\fpM}>\frac{f_{n_1+2}}{\fpM}+\frac{\fpm\Delta t_{n_1+2}}{2\fpM}
-\rb{\frac{f_{n_1+1}}{\fpm}+\frac{\fpM\Delta t_{n_1+1}}{2\fpm}},
\end{equation*}
where $\bar{\epsilon}\triangleq\frac{\epsilon\cdot I^2_{n_1+1}(\tau)}{\Delta f_{n_1+1} \Delta f_{n_1+2} \Delta t_{n_1+1}}\cdot \frac{2\fpM}{\fpm}$.
We rearrange such that
\begin{equation}
    \label{eq:der5}
    \begin{aligned}
        \bar{\epsilon}&>\frac{\Delta t_{n_1+2}}{\Delta t_{n_1+1}} + \frac{2 f_{n_1+2}}{\fpm \Delta t_{n_1+1}} - \frac{2f_{n_1+1} \fpM}{{\fpm}^2 \Delta t_{n_1+1}}-\frac{{\fpM}^2}{{\fpm}^2}\\
        &= \frac{\Delta t_{n_1+2}}{\Delta t_{n_1+1}} + 2\frac{f_{n_1+2}\fpm-f_{n_1+1}\fpM}{{\fpm}^2 \Delta t_{n_1+1} }-\frac{{\fpM}^2}{{\fpm}^2}.
    \end{aligned}
\end{equation}

We rewrite \eqref{eq:der5} as
\begin{equation}
\label{eq:der55}
    \bar{\epsilon}>\frac{\Delta t_{n_1+2}}{\Delta t_{n_1+1}}+2\frac{\Delta f_{n_1+1}}{\fpm \Delta t_{n_1+1}}-2f_{n_1+1}\frac{\fpM-\fpm}{{\fpm}^2\Delta t_{n_1+1}}   
    -\frac{{\fpM}^2}{{\fpm}^2}
\end{equation}
We bound the first term on the RHS as $\frac{\Delta t_{n_1+2}}{\Delta t_{n_1+1}}\geq \frac{\Tm}{\TM}$.
For the second term, we have that
\begin{equation}
\label{eq:der555}
2\frac{\Delta f_{n_1+1}}{\fpm\Delta t_{n_1+1}}= \frac{2f'(\bar{\xi}_{n_1+1})}{\fpm}\geq 2,    
\end{equation}
where $\bar{\xi}_{n_1+1}\in\sqb{t_{n_1+1},t_{n_1+2}}$.
Lastly, we bound the third term on the RHS of \eqref{eq:der55} as
\begin{align*}
2f_{n_1+1}\frac{\fpM-\fpm}{{\fpm}^2\Delta t_{n_1+1}}&= 2f_{n_1+1} \frac{f'(\zeta_{\mathsf{M}})-f'(\zeta_{\mathsf{m}})}{{\fpm}^2\Delta t_{n_1+1}}\nonumber
= \frac{2f_{n_1+1}}{{\fpm}^2} \vb{\frac{f'(\zeta_{\mathsf{M}})-f'(\zeta_{\mathsf{m}})}{\zeta_{\mathsf{M}}-\zeta_{\mathsf{m}}}}    \cdot\frac{\vb{\zeta_{\mathsf{M}}-\zeta_{\mathsf{m}}}}{\Delta t_{n_1+1}}\nonumber\\
&= \frac{2f_{n_1+1}}{{\fpm}^2} \cdot \vb{f''(\bar{\zeta}_{n_1+1})}   \cdot\frac{\vb{\zeta_{\mathsf{M}}-\zeta_{\mathsf{m}}}}{\Delta t_{n_1+1}}\label{eq:der55555}
\leq \frac{2f\rb{2\TM+\tau-L}}{{\fpm}^2} \cdot \fppM   \cdot\frac{t_{n_1+3}-t_{n_1+1}}{t_{n_1+2}-t_{n_1+1}}\nonumber
\leq  \frac{2\varphi\rb{t_\delta}}{{\fpm}^2} \cdot \fppM   \cdot\frac{2\TM}{\Tm},\nonumber
\end{align*}
where $t_\delta=-L+2\TM$, $\zeta_{\mathsf{m}},\zeta_{\mathsf{M}}\in\sqb{t_{n_1+1},t_{n_1+3}}$ s.t. $f'(\zeta_{\mathsf{m}})=\fpm$ and $f'(\zeta_{\mathsf{M}})=\fpM$, $\bar{\zeta}_{n_1+1}\in\sqb{t_{n_1+1},t_{n_1+3}}$ s.t. ${f''(\bar{\zeta}_{n_1+1})}=\frac{f'(\zeta_{\mathsf{M}})-f'(\zeta_{\mathsf{m}})}{\zeta_{\mathsf{M}}-\zeta_{\mathsf{m}}}$ and $\fppM=\max_{t\in\sqb{t_{n_1+1},t_{n_1+3}}} \vb{f''(t)}$. Furthermore, the inequalities above also use that $f_{n_1+1}\leq f(\tau+t_\delta)$, which is due to $t_{n_1-1}\leq \tau-L < t_{n_1} < t_{n_1+1}$. Finally,
\begin{equation*}
\fpMb=\max_{t\in\mathbb{S}^\tau_{\delta}} f'(t),\quad
\fpmb=\min_{t\in\mathbb{S}^\tau_{\delta}} f'(t),\quad \fppMb=\max_{t\in\mathbb{S}^\tau_{\delta}} f''(t),
\end{equation*}
where $\mathbb{S}^\tau_{\delta}=\sqb{\tau-L+\Tm/2,\tau-L+4\TM}$.
Using $[t_{n_1+1},t_{{n_1}+3}]\subseteq \mathbb{S}^\tau_{\delta}$, we get that $\fpM\leq\fpMb$, $\fpmb\leq\fpm$, and $\fppM\leq \fppMb$. By plugging these bounds in \eqref{eq:der55}, we get $\bar{\epsilon} > \bar{\epsilon}_{\mathsf{m}}$. According to the definition of $\bar{\epsilon}$ 
\begin{equation}
\label{eq:bound_eps}
    \begin{aligned}
        \epsilon=\frac{\Delta f_{n_1+1}\Delta f_{n_1+2}\Delta t_{n_1+1}}{I_{n_1+1}^2(\tau)}\frac{\fpm}{2\fpM}\bar{\epsilon}\geq \frac{{\fpm}^2\Tm^3}{I_{n_1+1}^2(\tau)}\frac{\fpm}{2\fpM}\bar{\epsilon}\geq \frac{{\fpmb}^3}{2\fpMb}\frac{\Tm^3}{\TM^2}\bar{\epsilon} \geq \frac{{\fpmb}^3}{2\fpMb}\frac{\Tm^3}{\TM^2}\bar{\epsilon}_{\mathsf{m}}.
    \end{aligned}
\end{equation}
\end{proof}
\else
    \begin{proof}
    The proof is in Section \ref{sect:proofs}.
    \end{proof}
\fi

We note that the monotonicity of $\frac{I_{n_1+2}(\tau)}{I_{n_1+1}(\tau)}$ can be guaranteed via \eqref{eq:monotonicity} if $\bar{\epsilon}_{\mathsf{m}}>0$. In this case, we can find $\tau_1$ by solving \eqref{eq:tau1_computation}. The next section tackles the case of multiple pulses.

\subsection{The Case of Multiple Pulses}
The following theorem is our main result, which relaxes the existing assumptions on the filter that enable perfect FRI signal reconstruction from TEM samples.

\begin{theo}[\textbf{FRI Input Recovery}]
\label{theo:1}
Let $g(t)=\sum_{k=1}^K a_k \varphi(t-\tau_k)$ be a FRI input satisfying $\varepsilon_a<\vb{a_k}<g_\infty$, $\Delta\tau_k>\varepsilon_\tau$, $\mathrm{supp}(\varphi)\subseteq[-L,\infty)$, $\norm{\varphi}_\infty=1, \vb{g(t)}\leq g_\infty$. Furthermore, assume that $\varphi\in\mathbb{C}^2\rb{-L,0} \cap\mathbb{C}\rb{\R}$ and $\varphi'(t)>0, t\in\left(-L,0\right)$. 
Let $\cb{t_n}_{n=1}^N$ be the output samples of a TEM with input $g(t)$, such that $4\TM<\varepsilon_\tau$ and $4\TM<L$. Then $\cb{\tau_k,a_k}_{k=1}^K$ can be perfectly recovered from $\cb{t_n}_{n=1}^N$ if
\begin{equation}
    2+\frac{\Tm}{\TM}-\frac{2\varphi\rb{t_\delta}}{{\fpmb}^2}\cdot\fppMb\cdot\frac{2\TM}{\Tm}-\frac{{\fpMb}^2}{{\fpmb}^2}>0,
    \label{eq:suff_cond}
\end{equation}
where $t_\delta=-L+2\TM$, $\fpMb=\max_{t\in\mathbb{S}_{\delta} }\varphi'(t)$, $\fpmb=\min_{t\in\mathbb{S}_{\delta}} \varphi'(t), \fppMb\triangleq \min_{t\in\mathbb{S}_{\delta}} \varphi''(t)$, and $\mathbb{S}_{\delta}= \sqb{-L+\Tm/2,-L+4\TM}$.
\end{theo}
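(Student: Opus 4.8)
The plan is to reduce Theorem~\ref{theo:1} to the single-pulse case already handled by Lemma~\ref{lem:monotonic}, then bootstrap an iterative \emph{peeling} argument across the $K$ pulses. The key structural facts I would exploit are the separation hypotheses: since $\Delta\tau_k>\varepsilon_\tau>4\TM$ and $\mathrm{supp}(\varphi)\subseteq[-L,\infty)$ with $4\TM<L$, each pulse $a_k\varphi(\cdot-\tau_k)$ has, in the interval $(\tau_k-L,\tau_k)$ immediately after its onset, at least four TEM spikes, and on a short window at the \emph{start} of that interval the value of $g$ is dominated by the $k$th pulse alone. The first step, therefore, is to establish a localization lemma: for a suitably chosen window near $\tau_k-L$, the contributions of all other pulses $j\neq k$ vanish. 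For $j<k$ (earlier pulses) this follows because their support $[\tau_j-L,\infty)$ need not vanish, so I must be careful---the cleanest route is to order the recovery \emph{left to right}, estimate $(\tau_1,a_1)$ first using the window right after $\tau_1-L$ where no earlier pulse exists and later pulses have not yet switched on (their support starts at $\tau_{k}-L>\tau_1$), then subtract the recovered pulse and repeat.

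Concretely, I would proceed as follows. First, define $n_k$ as the spike index right after the onset $\tau_k-L$, exactly as in \eqref{eq:n1}, and show via \eqref{eq:bound_isi} and $4\TM<\min(L,\varepsilon_\tau)$ that the analogue of \eqref{eq:four_spikes} holds, placing $t_{n_k+i}$, $i\in\{0,1,2,3\}$, inside $(\tau_k-L,\tau_k)$ where only the $k$th pulse is active. Second, invoke Lemma~\ref{lem:monotonic}: condition \eqref{eq:suff_cond} is precisely $\bar\epsilon_{\mathsf m}>0$, so the ratio $I_{n_k+2}(\tau)/I_{n_k+1}(\tau)$ is strictly increasing, hence injective, on the relevant $\tau$-range, and \eqref{eq:tau1_computation} (written for index $n_k$) determines $\tau_k$ uniquely. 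Third, recover $a_k$ by dividing: $a_k = \mathcal{L}_{n_k+1}g/I_{n_k+1}(\tau_k)$, which is well-defined since $I_{n_k+1}(\tau_k)>0$ as shown in the proof of Lemma~\ref{lem:monotonic}. Fourth, having $(\tau_k,a_k)$ exactly, form the residual $g(t)-a_k\varphi(t-\tau_k)$, whose remaining pulses all have onsets $\tau_j-L>\tau_k$ for $j>k$, and iterate. An induction on $k$ then yields all $\cb{\tau_k,a_k}_{k=1}^K$.

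I would present the induction carefully around the following invariant: after processing pulses $1,\dots,k-1$, the values $\{\mathcal{L}_n g\}$ restricted to the window of the $k$th pulse equal $a_k I_n(\tau_k)$ exactly, because the earlier pulses, although still supported there, have been exactly subtracted, and the later pulses are not yet supported (their support begins at $\tau_j-L>\tau_{k-1}\geq$ the right endpoint of the current window). The monotonicity step is a direct citation of the lemma and requires no new work beyond matching notation.

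The main obstacle I anticipate is \textbf{not} the monotonicity---that is fully delivered by Lemma~\ref{lem:monotonic}---but rather the \emph{support bookkeeping} that guarantees the peeling is clean: I must verify rigorously that within the four-spike window attached to pulse $k$, the \emph{only} pulse with nonzero contribution (after subtracting the already-recovered earlier pulses) is the $k$th, so that \eqref{eq:tau1_computation} genuinely isolates $\tau_k$. This hinges on the interplay between $\varepsilon_\tau>4\TM$, $\varepsilon_\tau<L$, and the noncompact right support $[-L,\infty)$ of $\varphi$; in particular the earlier pulses do \emph{not} disappear on their own and the argument truly relies on exact subtraction at each stage, so any propagation of error would break the scheme. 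Establishing that the windows are correctly nested and that $I_{n_k+1}(\tau_k)\neq 0$ uniformly over the admissible configurations is the step I would write most carefully.
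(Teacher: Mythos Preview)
Your plan is the paper's proof: locate $n_k$ from the first nonzero residual integral, invoke Lemma~\ref{lem:monotonic} (condition~\eqref{eq:suff_cond} is exactly $\bar\epsilon_{\mathsf m}>0$) to invert the ratio equation for $\tau_k$, recover $a_k=\mathcal{L}_{n_k+1}g/I_{n_k+1}(\tau_k)$, subtract, and iterate.

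There is, however, a genuine slip in precisely the place you flagged as the obstacle. You assert that later pulses ``have not yet switched on (their support starts at $\tau_k-L>\tau_1$)'' and, in the inductive step, that the remaining onsets satisfy $\tau_j-L>\tau_k$ and even $\tau_j-L>\tau_{k-1}$. All of these would require $\Delta\tau>L$, but the setup explicitly assumes $\varepsilon_\tau<L$, so these inequalities are false in general. The correct (and sufficient) statement compares the next onset not to $\tau_k$ but to the \emph{end of the four-spike window}: from $t_{n_k-1}\le\tau_k-L$ and $\Delta t_n\le\TM$ one gets
\[
t_{n_k+3}<\tau_k-L+4\TM<\tau_k-L+\varepsilon_\tau<\tau_{k+1}-L,
\]
so on $[t_{n_k+1},t_{n_k+3}]$ no pulse with index $>k$ contributes, while the pulses with index $<k$ have already been exactly subtracted. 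With this correction your induction goes through exactly as in the paper. One further point worth making explicit: since $\tau_k$ is unknown, $n_k$ cannot be read off from its definition~\eqref{eq:n1}; the paper detects it constructively as the first index for which the residual integral $\mathcal L_{n-1}^k g$ is nonzero, and your invariant (residual vanishes before the $k$th onset) is what justifies that detection step.
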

\begin{proof}
    To compute $n_1$ as defined in \eqref{eq:n1}, we note that $\Ln g=0, n<n_1-1$ and $\LN{n_1-1} g\neq 0$. Using \eqref{eq:ttransform}, we get that
    \begin{equation}
    {n}_1=\min_{n\in\cb{1,\dots,N}} \cb{n \setsep \vb{\LN{n-1} g}>0}.
    \end{equation}
    Via the separation property $\varepsilon_\tau>4\TM$, it follows that $t_n\not\in\mathrm{supp} \sqb{\varphi(\cdot-\tau_i)}$, for $n\in\cb{n_1+1,\dots,n_1+3}$ and $i \in \cb{2,\dots,K}$. Therefore pulses $k=2,3\dots,K$ have no effect on $\cb{t_{n_1+1},\dots,t_{n_1+3}}$. Using Lemma \ref{lem:monotonic} via condition $4\TM<L$, \eqref{eq:suff_cond} implies that  $\frac{I_{n_1+2}(\tau)}{I_{n_1+1}(\tau)}$ is strictly increasing, and therefore $\frac{I_{n_1+2}(\tau_1)}{I_{n_1+1}(\tau_1)}=\frac{\mathcal{L}_{n_1+2}g}{\mathcal{L}_{n_1+1}g}$ has a unique solution $\tau_1$. This can be computed via line search to arbitrary accuracy. 

    The amplitude of the first pulse then satisfies ${a}_1=\frac{\mathcal{L}_{n_1+1}g}{I_{n_1+1}({\tau}_1)}$. For the next pulse, we remove the contribution of the first one from the measurements via
\begin{equation*}
    \mathcal{L}_{n+1}^2 g \triangleq \mathcal{L}_{n+1} g - \int_{t_{n+1}}^{t_{n+2}} a_1 \varphi \rb{t-\tau_1} dt,\quad 1\leq n \leq N-1.
\end{equation*}
The process continues recursively for $k=2,\dots,K$, such that
\begin{equation}
\begin{gathered}
\label{eq:nk}
n_k=\min_{n\in\cb{1,\dots,N}} \cb{n \setsep \vb{\mathcal{L}_{n-1}^k g}>0},\\
\frac{I_{n_k+2}(\tau_k)}{I_{n_k+1}(\tau_k)}=\frac{\mathcal{L}_{n_k+2}g}{\mathcal{L}_{n_k+1}g},\quad {a}_k=\frac{\mathcal{L}_{n_k+1}g}{I_{n_k+1}({\tau}_k)}.
\end{gathered}
\end{equation}
\end{proof}

\begin{figure}[!t]	\centerline{\includegraphics[trim={0cm 0cm 0cm 0},clip,width=9cm]{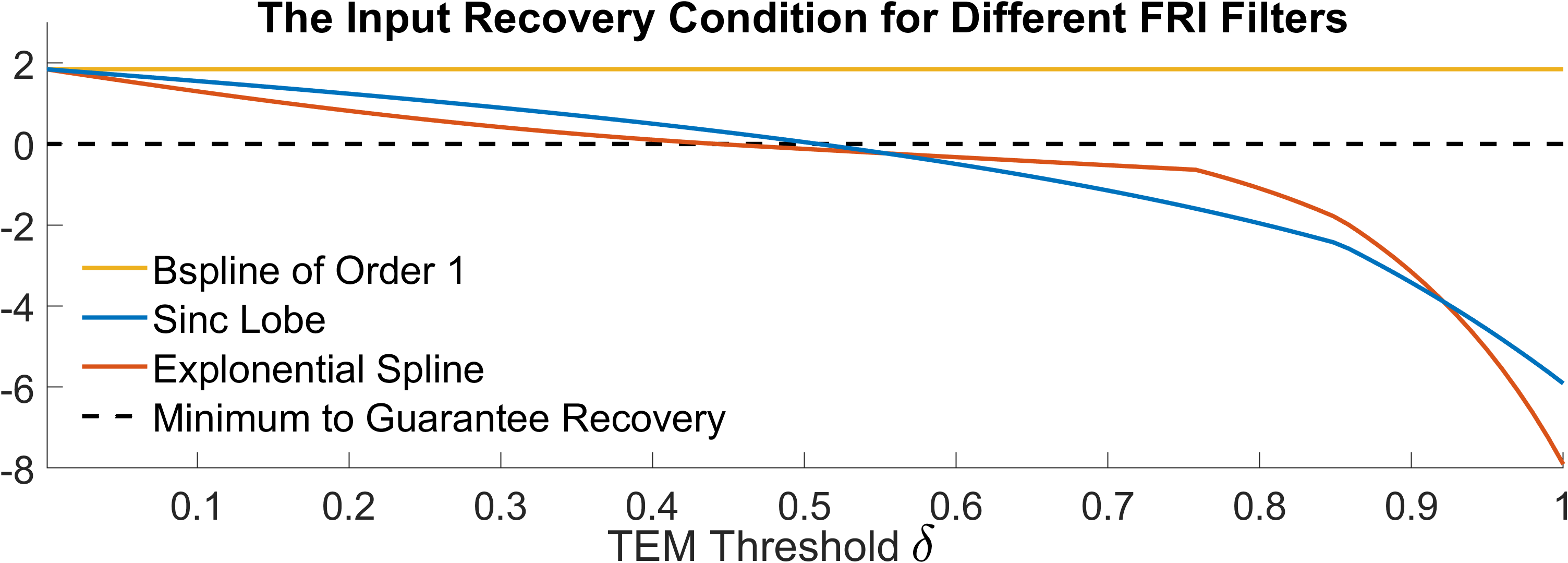}}
	\caption{Evaluating the recovery condition of Theorem \ref{theo:1} for particular cases of filters and a range of $100$ TEM sampling rates. For arbitrarily small $\delta$, all cases converge to the achievable recovery condition of a first order B-spline.}
	\label{fig:rec_condition}
\end{figure} 

First, we note that our separate conditions $4\TM<\varepsilon_\tau$ and $4\TM<L$ imply that $\varepsilon_\tau$ and $L$ are not interrelated as in \cite{Alexandru:2019:J}. In fact, in our case $\varepsilon_\tau$ can be arbitrarily small for high enough sampling rates. Second, we note that \eqref{eq:suff_cond} is both a sampling rate condition as well as a condition on $\varphi(t)$. For example, if $\varphi(t)$ is a first order B-spline, $\fpmb=\fpMb$, $\fppMb=0$ and \eqref{eq:suff_cond} reduces to $\frac{\Tm}{\TM}+1>0$, which is always true. To illustrate how condition \eqref{eq:suff_cond} behaves when changing the sampling rate $\delta$ and filter $\varphi(t)$, we computed the left-hand side of \eqref{eq:suff_cond} for the case of the B-spline of order $1$, the main lobe of a sinc $\varphi(t)=\mathrm{sinc}(\pi t)\cdot\ind_{\sqb{-1,1}}(t)$ and an exponential spline \cite{Alexandru:2019:J}. The results, for $100$ values of $\delta$ uniformly spaced between $10^{-3}$ and $1$, are depicted in \fig{fig:rec_condition}. As is the case for the results in the literature, it turns out that decreasing $\delta$ (increasing the sampling density) is favourable towards input recovery even in this general scenario. In the following we will show rigorously that, under a mild additional assumption, the observations in \fig{fig:rec_condition} hold true in the general case. 

\begin{cor}
\label{cor:1}
Let $g(t)$ be a function satisfying all the conditions in Theorem \ref{theo:1} apart from \eqref{eq:suff_cond}. Under the additional assumption that $\varphi'_+(-L)\neq0$, there exists a TEM threshold $\delta>0$ such that $g(t)$ is perfectly recovered from the corresponding TEM samples $\cb{t_n}_{n=1}^N$.
\end{cor}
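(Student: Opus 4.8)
The plan is to prove that the additional hypothesis $\varphi'_+(-L)\neq 0$ forces the left-hand side of \eqref{eq:suff_cond} to have a strictly positive limit as the TEM threshold $\delta\to 0$; then \eqref{eq:suff_cond} holds for all sufficiently small $\delta$ and Theorem~\ref{theo:1} applies without change. The key structural fact is that, for both the ASDM and the IF, $\Tm$ and $\TM$ are proportional to $\delta$, so $\Tm,\TM\to 0$ as $\delta\to 0$, whereas their ratio $\Tm/\TM=(b-g_\infty)/(b+g_\infty)=:\rho\in(0,1)$ is a constant independent of $\delta$. Hence $2\TM/\Tm=2/\rho$ is also constant, and the evaluation window $\mathbb{S}_{\delta}=[-L+\Tm/2,-L+4\TM]$ collapses to the single point $\{-L\}$.

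First I would take the limit of the terms in \eqref{eq:suff_cond} that do not involve $\fppMb$. Since $\mathbb{S}_{\delta}\downarrow\{-L\}$ and $\varphi'$ is continuous on a right neighbourhood of $-L$ with right-hand value $\varphi'_+(-L)$, both $\fpMb=\max_{\mathbb{S}_{\delta}}\varphi'$ and $\fpmb=\min_{\mathbb{S}_{\delta}}\varphi'$ converge to $\varphi'_+(-L)$. Because $\varphi'_+(-L)\neq 0$, this gives $\fpMb^2/\fpmb^2\to 1$ and keeps $\fpmb$ bounded away from zero, so $2+\Tm/\TM-\fpMb^2/\fpmb^2\to 1+\rho>0$. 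It therefore remains to show that the cross term $\tfrac{2\varphi(t_\delta)}{\fpmb^2}\fppMb\tfrac{2\TM}{\Tm}$ vanishes.

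The main obstacle is exactly this cross term, since $\varphi\in C^2(-L,0)$ does not prevent $\varphi''$, and hence $\fppMb$, from growing without bound as $\mathbb{S}_{\delta}$ approaches the endpoint $-L$. I would avoid the second derivative entirely by retracing the bound on $2f_{n_1+1}(\fpM-\fpm)/(\fpm^2\Delta t_{n_1+1})$ in the proof of Lemma~\ref{lem:monotonic} and stopping one inequality earlier, before $\fpM-\fpm$ is replaced by $\fppMb\,|\mathbb{S}_{\delta}|$. This yields the sharper estimate
\begin{equation*}
\Big(\tfrac{I_{n_1+2}}{I_{n_1+1}}\Big)'>0 \quad\text{whenever}\quad 2+\frac{\Tm}{\TM}-\frac{2\varphi(t_\delta)\,(\fpMb-\fpmb)}{\fpmb^2\,\Tm}-\frac{\fpMb^2}{\fpmb^2}>0,
\end{equation*}
in which the cross term carries the factor $\fpMb-\fpmb\to 0$ rather than $\fppMb$. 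Using $\varphi(-L)=0$ and the boundedness of $\varphi'$ to write $\varphi(t_\delta)=\varphi(-L+2\TM)\le 2\TM\,\norm{\varphi'}_\infty$, together with $\TM/\Tm=1/\rho$, the cross term is bounded by $4\norm{\varphi'}_\infty(\fpMb-\fpmb)/(\rho\,\fpmb^2)$, which tends to $0$ since $\fpMb-\fpmb\to 0$ and $\fpmb^2\to(\varphi'_+(-L))^2>0$.

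Collecting the three limits, the right-hand side of the sharper condition converges to $1+\rho>0$, so there is a $\delta>0$ for which $\big(I_{n_1+2}(\tau)/I_{n_1+1}(\tau)\big)'>0$, i.e. $I_{n_1+2}(\tau)/I_{n_1+1}(\tau)$ is strictly increasing. The recovery argument of Theorem~\ref{theo:1} then applies verbatim for this $\delta$: $\tau_1$ is the unique solution of \eqref{eq:tau1_computation}, obtained by line search, and the remaining pulses are peeled off recursively via \eqref{eq:nk}. The one point I expect to require care is the passage $\fpMb,\fpmb\to\varphi'_+(-L)$, which needs $\varphi'$ to be continuous up to $-L$ from the right (true for all filters of interest); without it an oscillating $\varphi'$ could keep $\fpMb-\fpmb$ from vanishing, and both \eqref{eq:suff_cond} and the sharper condition could fail.
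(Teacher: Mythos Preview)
Your argument follows the paper's route: let $\delta\to 0$, observe that $\Tm/\TM$ is a fixed constant and $\fpMb^2/\fpmb^2\to 1$, show the remaining cross term vanishes, and conclude that the left side of \eqref{eq:suff_cond} tends to $1+\Tm/\TM>0$, so Theorem~\ref{theo:1} applies for some $\delta>0$. The one substantive difference is your handling of the cross term. The paper simply records $\lim_{\delta\to 0}\tfrac{2\varphi(t_\delta)}{\fpmb^2}=0$ (from $\varphi(-L)=0$ and $\fpmb\to\varphi'_+(-L)\neq 0$) and immediately concludes $\bar{\epsilon}_{\mathsf{m}}\to 1+\Tm/\TM$, tacitly treating the factor $\fppMb$ as harmless. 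You instead go back into the proof of Lemma~\ref{lem:monotonic}, stop before the mean-value step, and replace $\fppMb\cdot\tfrac{2\TM}{\Tm}$ by $(\fpMb-\fpmb)/\Tm$; combined with $\varphi(t_\delta)=O(\TM)$ this makes the cross term vanish without ever invoking $\varphi''$. That refinement buys robustness against $\varphi''$ blowing up near $-L$, which the hypothesis $\varphi\in C^2(-L,0)$ alone does not rule out. Your closing caveat about needing $\varphi'$ to be right-continuous at $-L$ is well placed and applies equally to the paper's own limits $\fpmb\to\varphi'_+(-L)$ and $\fpMb^2/\fpmb^2\to 1$.
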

\begin{proof}
When $\delta\rightarrow0$, we can simplify equations \eqref{eq:monotonicity} as follows
\begin{equation}
\lim_{\delta\rightarrow0} \frac{{\fpMb}^2}{{\fpmb}^2}=1,\quad \lim_{\delta\rightarrow0} \frac{2\varphi\rb{t_\delta}}{{\fpmb}^2} =0,\quad\lim_{\delta\rightarrow0}\frac{\Tm}{\TM}=\frac{b-g_\infty}{b+g_\infty},
\end{equation}
which holds for both the ASDM and IF TEM. The second limit holds because $\lim_{\delta\rightarrow0}\fpmb=\varphi_+'(-L)\neq0$ and $\varphi(t)=0,t<-L$. By using continuity on $\R$ we get $\varphi(-L)=0$. Thus, $\lim_{\delta\rightarrow0}\bar{\epsilon}_{\mathsf{m}}=1+\frac{\Tm}{\TM}$, which is strictly positive. By writing $\bar{\epsilon}_{\mathsf{m}}(\delta)$ explicitly as a function of $\delta$, it follows that $\exists \delta=\delta_0$ such that $\bar{\epsilon}_{\mathsf{m}}(\delta_0)>0$, leading to  $\rb{\frac{I_{n_1+2}(\tau)}{I_{n_1+1}(\tau)}}'>0$ via  Lemma \ref{lem:monotonic}. This satisfies the conditions for the perfect recovery of $g(t)$ in Theorem \ref{theo:1}. 
\end{proof}

We note that condition $\varphi_+'(-L)=0$ is sufficient, but not necessary. In Section \ref{subsect:bspline} we show that recovery works even when the condition doesn't hold. We note that in practice, due to numerical errors, one would compute $n_k$ in \eqref{eq:nk} via $\vb{\mathcal{L}_{n-1}^k g}>tol$, where $tol$ is a tolerance set by the user. The proposed recovery is summarized in Algorithm \ref{alg:1}, where $\Ln g$ is computed via \eqref{eq:ttransform} for the ASDM and \eqref{eq:ttransformIF} for the IF.
\begin{algorithm}[!t]
\SetAlgoLined
{\bf Data:} $\cb{t_n}_{n=1}^N, \cb{\Ln g}_{n=1}^{N-1}, \varphi(t), tol$.\\
\KwResult{ $\widehat{K}, \ \cb{\widehat{\tau}_k,\widehat{a}_k}_{k=1}^K, \widehat{g}(t)$.}
 \begin{enumerate}
 \itemsep -0pt
\item Compute $\mathcal{L}_n^1g=\Ln g,$ for $n\in\cb{1,\dots,N}$ and $k=1$.
\item While $\exists n\in\cb{1,\dots,N}$ s.t. $\vb{\mathcal{L}_n^k g}>tol$
\begin{enumerate}[leftmargin =*, label = $2\alph*)$]
\item Compute $\widehat{n}_k=\min\limits_{n\in\cb{1,\dots,N}} \cb{n \setsep \vb{\mathcal{L}_{n-1}^k g}>tol} $
\item Compute $I_n(\tau)= \int_{t_n}^{t_{n+1}} \varphi(t-\tau) dt $ for $n\in\cb{\widehat{n}_k+1,\widehat{n}_k+2}$ and $\tau\in\rb{t_{n-2}+L,t_{n-1}+L}$.
\item Find $\widehat{\tau}_k$ from $\tfrac{I_{\widehat{n}_k+2}(\tau)}{I_{\widehat{n}_k+1}(\tau)}=\tfrac{\mathcal{L}_{\widehat{n}_k+2}}{\mathcal{L}_{\widehat{n}_k+1}}$ via line search.
\item Compute $\widehat{a}_k=\frac{\mathcal{L}_{\widehat{n}_k+1}^kg}{I_{\widehat{n}_k+1}(\widehat{\tau}_k)}$.
\item Compute $\mathcal{L}_{n}^{k+1} g \triangleq \mathcal{L}_{n}^k g - \int_{t_n}^{t_{n+1}} \widehat{a}_k \varphi \rb{t-\widehat{\tau}_k} dt$, $k=k+1$.
\end{enumerate}
\item Compute $\widehat{K}=k-1$.
\item Compute $\widehat{g}(t)=\sum_{k=1}^{\widehat{K}} \widehat{a}_k \varphi(t-\widehat{\tau}_k)$.
\end{enumerate}
\caption{Recovery Algorithm.}
\label{alg:1}
\end{algorithm}

\begin{rem}
    The tolerance $tol$ accounts for the effect of noise or numerical inaccuracies, which may lead to $\vb{\LN{n-1}^kg}>0$ even for $n<n_k$. Additionally, we note that Algorithm \ref{alg:1} does not necessarily require $\widehat{n}_k=n_k$ and works with any $\widehat{n}_k\geq n_k$ such that $\tau_k-L<t_{\widehat{n}_k+i}<\tau_k$, $\forall i\in\cb{0,1,2,3}$. 
\end{rem}

\subsection{Sampling Density for the IF TEM with No Bias}
\label{sect:compatibility}

In this subsection we deal with the special case $b=0$ for the IF TEM. Here, the lower bound $\Delta t_n \geq \Tm=\frac{\delta}{b+g_\infty}=\frac{\delta}{g_\infty}$ still holds true, but the upper bound does not, as \eqref{eq:bound_isi} assumes that $g(t)>b-g_\infty>0$, which is no longer true. We note that this bound is only required in our proofs for $\Delta t_{n_k+i}, i\in\cb{0,1,2}$. Assuming that $\varphi(t)$ satisfies  Theorem \ref{theo:1}, the following holds
\begin{equation}
    \begin{aligned}
        \delta=\vb{a_k\int_{t_{n_k+i}}^{t_{n_k+i+1}} \varphi(s-\tau_k) ds}
        \geq \varepsilon_a {\int_{t_{n_k+i}}^{t_{n_k+i+1}} \varphi(s-t_{n_k+i}-L) ds}
        =\varepsilon_a {\int_{-L}^{-L+\Delta t_{n_k+i}} \varphi(s) ds}=\varepsilon_a F\rb{-L+\Delta t_{n_k+i}},
    \end{aligned}
    \label{eq:aux}
\end{equation}
where $F(t)\triangleq \int_{-L}^t \varphi(s) ds$. Above we used that $\varphi(t)>0$ and is increasing for $t\in\sqb{-L,0}$, $t_{n_k+i}>\tau_k-L$. It follows that $F(t)$ is strictly increasing with a strictly increasing inverse $F^{-1}$ for $t\in\sqb{0,L}$. Therefore, \eqref{eq:aux} implies that
\begin{equation}
    \Delta t_{n_k+i} \leq L+ F^{-1}\rb{\frac{\delta}{\varepsilon_a}}\triangleq \TM, 
\end{equation}
where the new definition of $\TM$ only applies when $b=0$, therefore reinforcing the theoretical guarantee in Theorem \ref{theo:1} in this special case. In the next section we analyse the effect of noise on the recovery guarantees.

\section{Robustness to Noise}
\label{sect:noise}
Assume that the noise corrupted input $\widetilde{g}(t)$ satisfies 
\begin{equation}
    \widetilde{g}(t)=\sum_{k=1}^K a_k\varphi(t-\tau_k)+\eta(t), \quad t\in\sqb{0,\tM},
\end{equation}
where $\vb{\eta(t)}\leq\eta_\infty$ is drawn from the uniform distribution on $\sqb{-\eta_\infty,\eta_\infty}$. Function $\widetilde{g}(t)$ is input to a TEM that generates output samples $\cb{{t}_n}_{n=1}^{N}$. 
In this noisy case, we redefine $\wTm\triangleq\frac{2\delta}{b+g_\infty+\eta_\infty}$ $\wTM\triangleq\frac{2\delta}{b-g_\infty-\eta_\infty}$, which satisfies the old notation \eqref{eq:bound_isi} for $\eta_\infty=0$. Given that $\vb{\widetilde{g}(t)}<g_\infty+\eta_\infty$, it is shown similarly to \cite{Lazar:2004:Jc} that
$\wTm\leq \Delta {t}_n \leq \wTM$. The problem proposed is to recover $\cb{\tau_k,a_k}_{k=1}^K$ from the noisy TEM output.

The following theorem derives recovery guarantees in the case of one pulse $\widetilde{g}(t)=a_1\varphi(t-\tau_1)+\eta(t)$. This corresponds to one iteration of Step 2) in Algorithm \ref{alg:1}.
\begin{theo}[\textbf{Noisy Input Recovery}]
\label{theo:2}
    Let $\cb{\widetilde{t}_n}_{n=1}^{N}$ be the TEM output in response to $\widetilde{g}(t)=a_1\varphi(t-\tau_1)+\eta(t)$, where $\vb{\eta(t)}<\eta_\infty$, such that $\etab\triangleq\eta_{\infty} \TM <\fpmb \cdot \varepsilon_a \wTm$. Furthermore, let $\bar{\epsilon}_{\mathsf{m}}$ defined as in \eqref{eq:monotonicity} and assume that \eqref{eq:suff_cond} is true. Then $\widehat{\tau}_1$ computed via Algorithm \ref{alg:1} satisfies
    \begin{equation}
        \vb{\tau_1-\widehat{\tau}_1}\leq e_\tau\triangleq\frac{2g_\infty \etab \bar{\epsilon}_{\mathsf{m}}}{\rb{\varepsilon_a \wTm\fpmb-\etab}^2}\frac{2\fpMb}{{\fpmb}^3}\frac{\TM^3}{\Tm^3}.
    \end{equation}    
    Assuming $e_{\tau}<\Tm/2$, then $\vb{a_1-\widehat{a}_1}\leq e_a\triangleq\frac{\TM}{\Tm^2}\frac{2 e_\tau g_\infty+\etab}{{\fpmb}^2}$.
\end{theo}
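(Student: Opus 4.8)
The plan is to treat the noisy recovery as a perturbation of the noiseless argument underlying Theorem~\ref{theo:1} and Lemma~\ref{lem:monotonic}. First I would exploit the linearity of the local integral operator: for each index $n$, $\Ln \widetilde{g} = a_1 I_n(\tau_1) + \epsilon_n$, where $\epsilon_n \eq \int_{t_n}^{t_{n+1}} \eta(s)\,ds$ is the noise contribution, satisfying $\vb{\epsilon_n} \leq \eta_\infty \Delta t_n \leq \etab$ by \eqref{eq:bound_isi}. Since Step~2 of Algorithm~\ref{alg:1} fixes $\widehat{\tau}_1$ by solving $\frac{I_{n_1+2}(\tau)}{I_{n_1+1}(\tau)} = \frac{\LN{n_1+2}\widetilde{g}}{\LN{n_1+1}\widetilde{g}}$, the entire effect of the noise is concentrated on the right-hand side. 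Writing $\rho(\tau) \eq \frac{I_{n_1+2}(\tau)}{I_{n_1+1}(\tau)}$ for the strictly increasing map of Lemma~\ref{lem:monotonic}, the algorithm returns the $\widehat{\tau}_1$ with $\rho(\widehat{\tau}_1) = \rho(\tau_1) + \Delta$, where $\Delta$ is the discrepancy between the noisy and noiseless ratios.

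Next I would bound $\vb{\Delta}$. With $A \eq a_1 I_{n_1+2}(\tau_1)$ and $B \eq a_1 I_{n_1+1}(\tau_1)$, a direct computation gives $\Delta = \frac{\epsilon_{n_1+2} B - A \epsilon_{n_1+1}}{\rb{B + \epsilon_{n_1+1}} B}$. The numerator is controlled using $\vb{a_1}\leq g_\infty$ and the uniform bound $\vb{I_n(\tau_1)}\leq\norm{\varphi}_\infty \Delta t_n$, so that $\vb{\epsilon_{n_1+2}B - A\epsilon_{n_1+1}}$ is at most a constant multiple of $g_\infty\etab$ times the sampling-density factors. For the denominator I would first establish a lower bound of the form $I_{n_1+1}(\tau_1)\geq\fpmb\wTm$, using that $\varphi$ is increasing, $\varphi(-L)=0$, and $t_{n_1+1}-\tau_1 > -L+\wTm$; the standing hypothesis $\etab < \fpmb\varepsilon_a\wTm$ then forces both $B$ and $B+\epsilon_{n_1+1}$ to stay positive and bounded below by $\varepsilon_a\wTm\fpmb - \etab$, producing the factor $\rb{\varepsilon_a\wTm\fpmb-\etab}^2$. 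Finally I would convert the ratio error into a time error through the mean value theorem and the derivative lower bound of Lemma~\ref{lem:monotonic}: since $\rho$ is strictly increasing with $\rho'(\tau)\geq\frac{\fpmb^3}{2\fpMb}\frac{\Tm^3}{\TM^2}\bar{\epsilon}_{\mathsf{m}}$, we get $\vb{\tau_1-\widehat{\tau}_1}\leq \vb{\Delta}/\inf_\tau\rho'(\tau)$, and collecting the constants yields $e_\tau$.

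For the amplitude I would start from $\widehat{a}_1 = \frac{\LN{n_1+1}\widetilde{g}}{I_{n_1+1}(\widehat{\tau}_1)}$ and subtract the identity $a_1 = \frac{a_1 I_{n_1+1}(\tau_1)}{I_{n_1+1}(\tau_1)}$, writing $\widehat{a}_1 - a_1 = \frac{a_1\rb{I_{n_1+1}(\tau_1) - I_{n_1+1}(\widehat{\tau}_1)} + \epsilon_{n_1+1}}{I_{n_1+1}(\widehat{\tau}_1)}$. The numerator splits into a Lipschitz term in $\tau$ (using $\vb{I_{n_1+1}'}\leq 2\norm{\varphi}_\infty$, hence $\vb{I_{n_1+1}(\tau_1)-I_{n_1+1}(\widehat{\tau}_1)}\leq 2 e_\tau$) scaled by $\vb{a_1}\leq g_\infty$, plus the noise term $\vb{\epsilon_{n_1+1}}\leq\etab$, so it is at most $2 e_\tau g_\infty + \etab$. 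The assumption $e_\tau < \Tm/2$ ensures that $\widehat{\tau}_1$ is close enough to $\tau_1$ that the window $[t_{n_1+1},t_{n_1+2}]$ still sees the strictly rising part of $\varphi(\cdot-\widehat{\tau}_1)$, yielding a lower bound on $I_{n_1+1}(\widehat{\tau}_1)$ of order $\fpmb^2\Tm^2/\TM$; dividing then gives $e_a$.

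The step I expect to be the main obstacle is establishing these denominator lower bounds robustly: proving that $I_{n_1+1}(\tau_1)$ and, crucially, $I_{n_1+1}(\widehat{\tau}_1)$ stay bounded away from zero when both the noise and the time-estimation error $e_\tau$ perturb the integration window. This is precisely where the two hypotheses $\etab<\fpmb\varepsilon_a\wTm$ and $e_\tau<\Tm/2$ enter, and care is required because $\fpmb$ is the infimum of $\varphi'$ only over $\mathbb{S}_\delta=\sqb{-L+\Tm/2,-L+4\TM}$, so every application of the slope lower bound must be restricted to the sub-window that is guaranteed to lie inside $\mathbb{S}_\delta$.
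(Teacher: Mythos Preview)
Your plan mirrors the paper's proof: decompose the noisy ratio as the noiseless one plus a perturbation $\Delta$, bound $\vb{\Delta}$ by controlling numerator and denominator separately (using $\vb{\Ln\eta}\leq\etab$ and the lower bound $\vb{a_1}I_{n_1+1}(\tau_1)\geq\varepsilon_a\fpmb\wTm$), then invert via the slope lower bound of Lemma~\ref{lem:monotonic} and the mean value theorem to obtain $e_\tau$. Your identification of the denominator positivity as the crux, secured by the hypotheses $\etab<\fpmb\varepsilon_a\wTm$ and $e_\tau<\Tm/2$, is exactly how the paper argues.

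One small mismatch in the amplitude step: your decomposition
\[
\widehat{a}_1-a_1=\frac{a_1\rb{I_{n_1+1}(\tau_1)-I_{n_1+1}(\widehat{\tau}_1)}+\epsilon_{n_1+1}}{I_{n_1+1}(\widehat{\tau}_1)}
\]
has a single factor of $I_{n_1+1}(\widehat{\tau}_1)$ in the denominator, so the bound it naturally yields is $\frac{2e_\tau g_\infty+\etab}{\fpmb\Tm}$, not the stated $e_a=\frac{\TM(2e_\tau g_\infty+\etab)}{\fpmb^2\Tm^2}$. The paper instead writes $a_1=\LN{n_1+1}g/I_{n_1+1}(\tau_1)$ and $\widehat{a}_1=\LN{n_1+1}\widetilde{g}/I_{n_1+1}(\widehat{\tau}_1)$ over the common denominator $I_{n_1+1}(\tau_1)\,I_{n_1+1}(\widehat{\tau}_1)$, picking up an extra $\wTM$ upstairs and an extra $\fpmb\Tm$ downstairs, which reproduces the displayed $e_a$ exactly. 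Your route is cleaner and would give a tighter constant, but your ``lower bound on $I_{n_1+1}(\widehat{\tau}_1)$ of order $\fpmb^2\Tm^2/\TM$'' is reverse-engineered from the target rather than what your argument actually delivers; from $e_\tau<\Tm/2$ the bound you can prove (and the one the paper proves) is $I_{n_1+1}(\widehat{\tau}_1)\geq\fpmb\Tm$.
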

\if\NoisyRecInAppendix\FL
\begin{proof}
From Theorem \ref{theo:1} we have that, if \eqref{eq:suff_cond} is true, then $\frac{I_{n_1+2}(\tau)}{I_{n_1+1}(\tau)}$ is strictly increasing and $\tau_1$ is correctly computed by solving $\frac{I_{n_1+2}(\tau)}{I_{n_1+1}(\tau)}=\frac{\mathcal{L}_{n_1+2}{g}}{\mathcal{L}_{n_1+1}{g}}$ for $\tau$. There are two factors contributing to the error $\vb{\tau_1-\widehat{\tau}_1}$: the measurement error on $\frac{\mathcal{L}_{n_1+2}\widetilde{g}}{\mathcal{L}_{n_1+1}\widetilde{g}}$, and the slope of $\frac{I_{n_1+2}(\tau)}{I_{n_1+1}(\tau)}$, which will be evaluated as follows. We start by providing bounds for $\mathcal{L}_{n_1+1}g$. We consider two cases

$\mathbf{1)\ a_1>0}$. For $t\in\sqb{t_{n_1+1},t_{n_1+2}}$ we have that $\varphi(t-\tau_1)>0$, $\varphi'(t-\tau_1)>\fpmb>0$ and $g(t)>0$, $g'(t)>\varepsilon_a\fpmb>0$. Therefore, we can derive the following bounds
\begin{equation}
\label{eq:bounds_Lg}
\begin{gathered}
\varepsilon_a \fpmb (t-t_{n_1+1})\leq g(t)\leq g_\infty, \quad t\in\sqb{t_{n_1+1},t_{n_1+2}},\\
\varepsilon_a \wTm\fpmb\leq\mathcal{L}_{n_1+1}g\leq g_\infty\wTM.
\end{gathered}
\end{equation}
We know that $\etab<\varepsilon_a \wTm\fpmb$, where $\etab\triangleq\eta_\infty \wTM$. Furthermore, $\mathcal{L}_{n_1+1}\eta<\etab$, $\mathcal{L}_{n_1+2}\eta<\etab$, and therefore
\begin{equation}
    \label{eq:ratio_bound}
    \frac{\mathcal{L}_{n_1+2}\widetilde{g}}{\mathcal{L}_{n_1+1}\widetilde{g}}\leq\frac{\mathcal{L}_{n_1+2}{g}+\etab}{\mathcal{L}_{n_1+1}{g}-\etab}.
\end{equation}
The measurement error then can be bounded as
\begin{equation}
\label{eq:abs_bound1}
\begin{aligned}
    \frac{\mathcal{L}_{n_1+2}\widetilde{g}}{\mathcal{L}_{n_1+1}\widetilde{g}}-\frac{\mathcal{L}_{n_1+2}{g}}{\mathcal{L}_{n_1+1}{g}}\leq    \frac{\mathcal{L}_{n_1+2}{g}+\etab}{\mathcal{L}_{n_1+1}{g}-\etab}-\frac{\mathcal{L}_{n_1+2}{g}}{\mathcal{L}_{n_1+1}{g}}
    \leq \etab\cdot\frac{\mathcal{L}_{n_1+1}{g}+\mathcal{L}_{n_1+2}{g}}{\mathcal{L}_{n_1+1}{g}\rb{\mathcal{L}_{n_1+1}{g}-\etab}}
    \leq \frac{2g_\infty\wTM \etab}{\varepsilon_a \wTm\fpmb\rb{\varepsilon_a \wTm\fpmb-\etab}}.
\end{aligned}
\end{equation}
Similarly, it can be shown that
\begin{equation}
\label{eq:abs_bound2}
    \frac{\mathcal{L}_{n_1+2}{g}}{\mathcal{L}_{n_1+1}{g}}-\frac{\mathcal{L}_{n_1+2}\widetilde{g}}{\mathcal{L}_{n_1+1}\widetilde{g}}    \leq \frac{2g_\infty\wTM \etab}{\varepsilon_a \wTm\fpmb\rb{\varepsilon_a \wTm\fpmb+\etab}}.
\end{equation}
Using \eqref{eq:abs_bound1} and \eqref{eq:abs_bound2}, we get that 
\begin{equation}
    \label{eq:bound_measurements}
    \vb{\frac{\mathcal{L}_{n_1+2}\widetilde{g}}{\mathcal{L}_{n_1+1}\widetilde{g}}-\frac{\mathcal{L}_{n_1+2}{g}}{\mathcal{L}_{n_1+1}{g}}}\leq \frac{2g_\infty\wTM \etab}{\rb{\varepsilon_a \wTm\fpmb-\etab}^2}.
\end{equation}

$\mathbf{2)\ a_1<0}$. We repeat derivations \eqref{eq:bounds_Lg}-\eqref{eq:bound_measurements} where $g(t)$ is replaced by $-g(t)$, yielding the same bound \eqref{eq:bound_measurements}.

Lastly, the error for computing $\tau$ via $\frac{I_{n_1+2}(\tau)}{I_{n_1+1}(\tau)}=\frac{\mathcal{L}_{n_1+2}\widetilde{g}}{\mathcal{L}_{n_1+1}\widetilde{g}}$ is
\begin{equation*}
\begin{aligned}
    \vb{\tau_1-\widehat{\tau}_1}&< \frac{2g_\infty\wTM \etab}{\rb{\varepsilon_a \wTm\fpmb-\etab}^2} \cdot \sqb{\min_{\tau} \rb{\frac{I_{n_1+2}(\tau)}{I_{n_1+1}(\tau)}}'}^{-1}\leq \frac{2g_\infty\wTM \etab \bar{\epsilon}_{\mathsf{m}}}{\rb{\varepsilon_a \wTm\fpmb-\etab}^2}\cdot\frac{2\fpMb}{{\fpmb}^3}\frac{\TM^2}{\Tm^3}=e_\tau,
\end{aligned}
\end{equation*}
where the last inequality uses \eqref{eq:bound_eps}. For $a_1$, we note that
\begin{equation}
    a_1=
    \frac{\mathcal{L}_{n_1+1}g}{\mathcal{L}_{n_1+1}\sqb{\varphi(\cdot-\tau_1)}},\quad \widehat{a}_1=
    \frac{\mathcal{L}_{n_1+1}g+\mathcal{L}_{n_1+1}\eta}{\mathcal{L}_{n_1+1}\sqb{\varphi(\cdot-\widehat{\tau}_1)}},
\end{equation}
and thus
\begin{align*}
    \vb{a_1-\widehat{a}_1}&\leq\frac{\vb{\mathcal{L}_{n_1+1}g\cdot{\mathcal{L}_{n_1+1}\sqb{\varphi(\cdot-\widehat{\tau}_1)-\varphi(\cdot-{\tau}_1)}}}}{\vb{\mathcal{L}_{n_1+1}\sqb{\varphi(\cdot-\tau_1)}}\cdot\vb{\mathcal{L}_{n_1+1}\sqb{\varphi(\cdot-\widehat{\tau}_1)}}}+\mathcal{N},
\end{align*}
where $\mathcal{N}\triangleq\frac{\vb{\mathcal{L}_{n_1+1}\eta\cdot\mathcal{L}_{n_1+1}\sqb{\varphi(\cdot-\tau_1)}}}{\vb{\mathcal{L}_{n_1+1}\sqb{\varphi(\cdot-\tau_1)}}\cdot\vb{\mathcal{L}_{n_1+1}\sqb{\varphi(\cdot-\widehat{\tau}_1)}}}$.
We use that 
\begin{equation}
    \mathcal{L}_{n_1+1}\sqb{\varphi(\cdot-\widehat{\tau}_1)}=\int_{t_{n_1+1}+\tau_1-\widehat{\tau}_1}^{t_{n_1+2}+\tau_1-\widehat{\tau}_1} f(s) ds,
    \label{eq:filter_tau_hat}
\end{equation}
\begin{align*}
    \vb{{a}_1-\widehat{a}_1}&\leq\frac{\vb{\mathcal{L}_{n_1+1}g}\cdot\vb{\int_{\mathbb{M}}f(s)ds}}{\vb{\mathcal{L}_{n_1+1}\sqb{\varphi(\cdot-\tau_1)}}\cdot\vb{\mathcal{L}_{n_1+1}\sqb{\varphi(\cdot-\widehat{\tau}_1)}}}+\mathcal{N},
\end{align*}
s.t. $\mathbb{M}\triangleq \sqb{t_{n_1+1},t_{n_1+1}+\widehat{\tau}_1-\tau_1}\cup \sqb{t_{n_1+2}+\widehat{\tau}_1-\tau_1,t_{n_1+2}}$. Using $\vb{\mathcal{L}_{n_1+1}g}<g_\infty\wTM$, $\vb{\mathcal{L}_{n_1+1}\eta}<\etab$, $\vb{\int_{\mathbb{M}}f(s)ds}\leq 2 e_\tau$, and $\vb{\mathcal{L}_{n_1+1}\varphi(\cdot-\tau_1)}<\wTM$,
\begin{equation}
\label{eq:amp_err_bound}
\begin{aligned}
    \vb{{a}_1-\widehat{a}_1}&\leq \frac{\wTM(2 e_\tau g_\infty+\etab)}{\vb{\mathcal{L}_{n_1+1}\sqb{\varphi(\cdot-\tau_1)}}\cdot\vb{\mathcal{L}_{n_1+1}\sqb{\varphi(\cdot-\widehat{\tau}_1)}}}.
\end{aligned}
\end{equation}
The following holds from \eqref{eq:filter_tau_hat}, using that $e_{\tau}<\Tm/2$.
\begin{equation}
\begin{gathered}
\mathcal{L}_{n_1+1}\sqb{\varphi(\cdot-\widehat{\tau}_1)}\geq \int_{-L+\tau+\Tm/2}^{-L+\tau+3\Tm/2} f(s) ds \geq \Tm \fpmb.
\end{gathered}
\end{equation}
Using \eqref{eq:bounds_Lg} we get $\mathcal{L}_{n_1+1}\sqb{\varphi(\cdot-{\tau}_1)}\geq \Tm \fpmb$, which completes the proof via \eqref{eq:amp_err_bound}.

\end{proof}
\else
    \begin{proof}
    The proof is in Section \ref{sect:proofs}.
    \end{proof}
\fi

We extend the result in Theorem \ref{theo:2} recursively for $K$ pulses. Specifically, we assume that Step 2) of Algorithm \ref{alg:1} was computed $K-1$ times, for $K\geq2$, and that $\cb{e_{\tau_k},e_{a_k}}_{k=1}^{K-1}$ are known, and we derive $e_{\tau_K}$ and $e_{a_K}$. In a noiseless scenario, $a_{K}$ and $\tau_K$ could be perfectly recovered from local integrals $\LN{n} a_K\varphi(\cdot-\tau_K)$. Thus, we first estimate a noise bound for the local integral of the $K$th pulse $\eta_{\delta,K}$ satisfying $\vb{\LN{n} \sqb{a_K\varphi(\cdot-\tau_K)}-\LN{n}^{K}\widetilde{g}}<\eta_{\delta,K}$. Then $e_{\tau_K}$ and $e_{a_K}$ can be derived by via (\ref{eq:ratio_bound}-\ref{eq:amp_err_bound}), where $\etab$ is substituted with $\eta_{\delta,K}$. Using Algorithm \ref{alg:1} step 2e),
\begin{equation*}
\begin{aligned}    \LN{n}^{K}\widetilde{g}=\rb{\LN{n} g+\LN{n}{\eta}}-\sum_{k=1}^{K-1}\LN{n}\sqb{\widehat{a}_k\varphi(\cdot-\widehat{\tau}_k)}.
\end{aligned}
\end{equation*} 
When computing $\vb{\LN{n} a_K\varphi(\cdot-\tau_k)-\LN{n}^{K}\widetilde{g}}$ we get
\begin{equation}
    \begin{gathered}
    \vb{\sum_{k=1}^{K-1}{\LN{n} \sqb{{a}_k\varphi(\cdot-{\tau}_k)-\widehat{a}_k\varphi(\cdot-\widehat{\tau}_k)}+\LN{n} {\eta}}}
    \leq\sum_{k=1}^{K-1}{\LN{n} \vb{{a}_k\varphi(\cdot-{\tau}_k)-\widehat{a}_k\varphi(\cdot-\widehat{\tau}_k)}+\etab}
    \end{gathered}
    \label{eq:n}
\end{equation}
We bound the absolute value in \eqref{eq:n} as
\begin{equation}
    \begin{aligned}
    \vb{{a}_k\varphi(t-{\tau}_k)-\widehat{a}_k\varphi(t-\widehat{\tau}_k)}
    &=\vb{{a}_k\sqb{\varphi(t-{\tau}_k)-\varphi(t-\widehat{\tau}_k)}+\rb{a_k-\widehat{a}_k}\varphi(t-\widehat{\tau}_k)}\\
    &\leq \vb{{a}_k} \varphi'_\mathsf{M}  e_{\tau_k}+e_{a_k}\leq g_\infty \varphi'_\mathsf{M}  e_{\tau_k}+e_{a_k},
    \end{aligned}
    \label{eq:etau_ea}
\end{equation}
where $\varphi'_\mathsf{M}\triangleq \max \cb{\norm{\varphi_-'}_{\infty},\norm{\varphi_+'}_{\infty}}$ and $\varphi_-'(t), \varphi_+'(t)$ are the left and right derivatives, respectively. Using \eqref{eq:etau_ea} in \eqref{eq:n},
\begin{equation}
\begin{gathered}
    \vb{\LN{n} a_K\varphi(\cdot-\tau_k)-\LN{n}^{K}\widetilde{g}}\leq \eta_{\delta,K},\quad 
    \eta_{\delta,K}\triangleq\wTM\sum_{k=1}^{K-1}\rb{g_\infty \varphi'_\mathsf{M} e_{\tau_k}+e_{a_k}}+\etab.
\end{gathered}
\label{eq:etadelK}
\end{equation}

By repeating steps (\ref{eq:ratio_bound}-\ref{eq:amp_err_bound}) for $\eta_{\delta,K}$, we get that 
    \begin{equation}
    \begin{gathered}
        e_{\tau_K}=\frac{2g_\infty \etab \bar{\epsilon}_{\mathsf{m}}}{\rb{\varepsilon_a \wTm\fpmb-\eta_{\delta,K}}^2}\frac{2\fpMb}{{\fpmb}^3}\frac{\TM^3}{\Tm^3}.
    \end{gathered}
    \label{eq:errors}
    \end{equation}   
Furthermore, assuming $e_{\tau_K}<\Tm/2$, $e_{a_K}=\frac{\TM}{\Tm^2}\frac{2 e_\tau g_\infty+\eta_{\delta,K}}{{\fpmb}^2}$.
Then the estimation errors $\cb{e_{\tau_k}, e_{a_k}}_{k=1}^K$ for Algorithm \ref{alg:1} can be computed recursively as above via  $\eta_{\delta,1}=\etab$.

\vspace{2em}
\section{Numerical and Hardware Experiments}
\label{sect:numerical_study}
Here we test our recovery approach for a wide selection of FRI filters including filters previously used in the literature and new synthetically generated filters. We consider the case of a simulated ASDM and also an analog hardware implementation. Furthermore, to allow a comparison with the existing methods, we show examples using sampling setups proposed in the literature \cite{Alexandru:2019:J,Hilton:2023:C}. We evaluate the recovery error of the FRI parameters using $\mathsf{Err}_{\tau}$ and $\mathsf{Err}_{a}$, defined as
\begin{equation}
    \mathsf{Err}_{\tau}=\tfrac{1}{K}\sum_{k=1}^K 100\cdot\frac{\vb{\widehat{\tau}_k-\tau_k}}{\vb{\tau_k}},\quad \mathsf{Err}_{a}=\tfrac{1}{K}\sum_{k=1}^K 100\cdot\frac{\vb{\widehat{a}_k-a_k}}{\vb{a_k}}.
\end{equation}
Furthermore, we evaluate the recovery error for $g(t)$ as $\mathsf{Err}_g=100\cdot\frac{\norm{g-\widehat{g}}_2}{\norm{g}_2}\ (\%)$.
This section is organised as follows. Sections \ref{subsect:bspline} and \ref{subsect:random} present examples with a B-spline filter and a randomly generated filter, respectively. Section \ref{sect:comparative} shows recovery examples with an E-spline filter and a hyperbolic secant filter. Finally, Section \ref{sect:hardware} presents a recovery example for a hardware implementation of an ASDM.

\subsection{FRI Input Recovery with a B-spline Filter}
\label{subsect:bspline}
We first evaluate Algorithm \ref{alg:1} for a filter $\varphi_b$ representing a B-Spline of order $3$ scaled such that it is supported in $[-1,1]$ and has amplitude $1$. We note that $\varphi_b'(-L)=0$, and therefore the condition in Corollary \ref{cor:1} is not true. Even so, we demonstrate numerically that recovery works in this case. Signal $g(t)$ was generated for $\tau_1=1.15, \tau_2=2.52, \tau_3=4.74, \tau_4=5.81, a_1=3.22, a_2=-2.34, a_3=2.87, a_4=3.54$. Signal $g(t)$ was sampled with an ASDM with parameters $b=12,\delta=1, g_0=0$. The input Diracs, signal $g(t)$, TEM output along with the reconstructed signals via Algorithm \ref{alg:1} with $tol=0.05$ are depicted in \fig{fig:spline_random}(a). The resulting errors are $\mathsf{Err}_{\tau}=0.098\%$, $\mathsf{Err}_{a}=0.39\%$, and $\mathsf{Err}_g=0.86\%$.

\subsection{FRI Recovery with a Random Filter}
\label{subsect:random}
To demonstrate the generalization enabled by the proposed algorithm, we considered the case of a randomly generated filter $\varphi_r(t)$, which was not validated numerically or demonstrated theoretically in the existing literature. The random filter consists of a random increasing function followed by a random decreasing function. To generate the first function, we convolved a uniform random noise sequence with a B-spline of degree $9$. We subtracted the minimum to make it strictly positive, then integrated it and scaled it to be in the interval $[0,1]$. The second function was generated in the same way, only here we subtracted the maximum to make the final function strictly decreasing. The resulted filter $\varphi_r(t)$ was used to generate an input $g(t)$ for $\tau_1=1.21, \tau_2=2.26, \tau_3=4.65, a_1=3.22, a_2=-2.33, a_3=-2.87$. Signal $g(t)$ was input to an ASDM with parameters $\delta=0.5, b=12, g_0=0$. The input Diracs, TEM input, TEM output, the recovery of the input Diracs and of the TEM input via Algorithm \ref{alg:1} with $tol=0.006$ are depicted in \fig{fig:spline_random}(b). We note that the sampling rate is higher compared to \fig{fig:spline_random}(a), mainly due to using a filter with an irregular shape. However, as shown in Corollary \ref{cor:1} recovery is still possible for $\delta$ small enough. The corresponding recovery errors are $\mathsf{Err}_{\tau}=0.12\%$, $\mathsf{Err}_{a}=1.33\%$, and $\mathsf{Err}_g=1.29\%$.

\begin{figure}[!t]
\begin{center}
\includegraphics[trim={0cm 0cm 0cm 0},clip,width=0.47\textwidth]{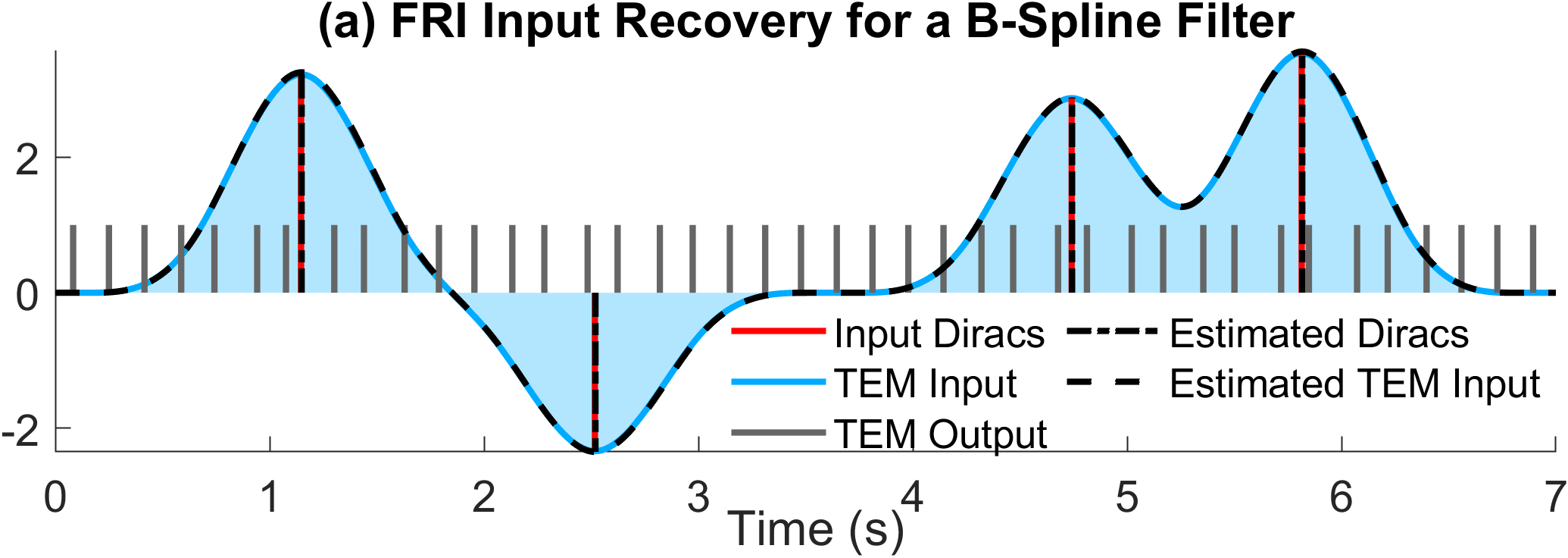}
\includegraphics[trim={0cm 0cm 0cm 0},clip,width=0.47\textwidth]{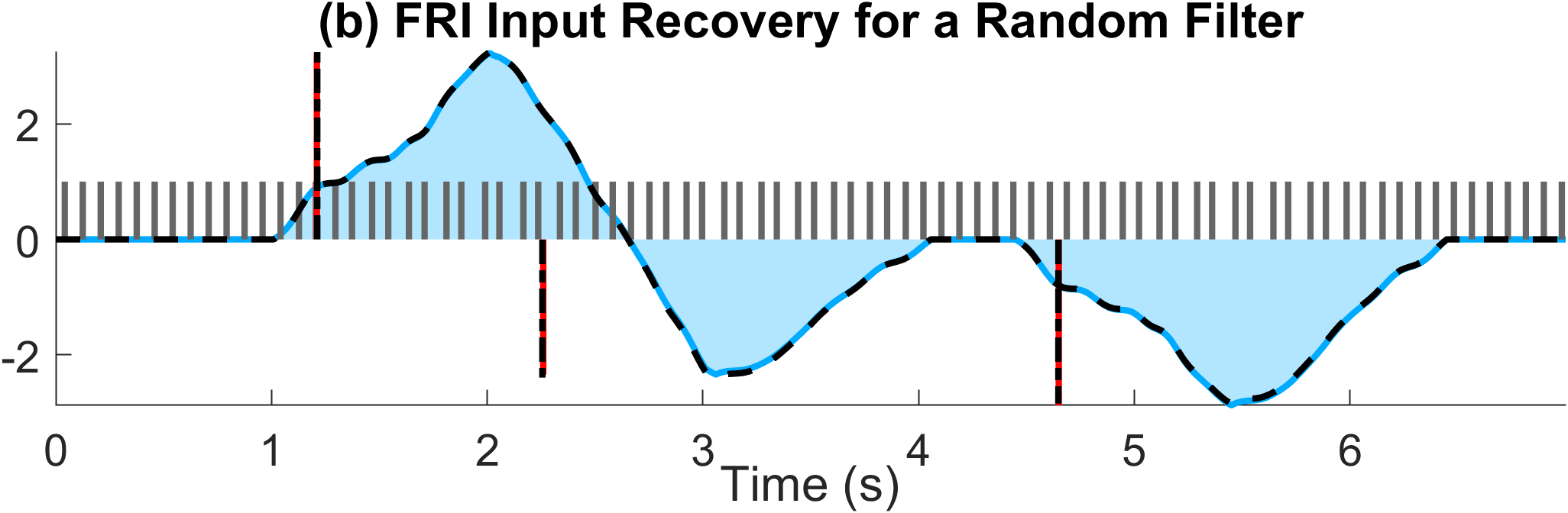}
\end{center}
\caption{Evaluating Algorithm \ref{alg:1} with a B-spline filter and a randomly generated filter. This result demonstrates the applicability of the proposed method for a wider class of filters than previously possible.}
  \label{fig:spline_random}
\end{figure}

\subsection{Comparison with Existing Methods}
\label{sect:comparative}
Here we compare Algorithm \ref{alg:1} with the method in \cite{Alexandru:2019:J}, for the case of an E-spline filter, and also with the method in \cite{Hilton:2023:C} for the case of a squared hyperbolic secant filter. We implement Algorithm \ref{alg:1} based on the IF TEM model to allow a comparison with the results in the literature. As the method in \cite{Alexandru:2019:J} is restricted to specific class of filters, we first selected a second order E-spline for both methods, defined as 
\[
    \varphi_e(t)=\left\lbrace 
    \begin{array}{cc}
         \frac{e^{\alpha_1-\alpha_0}}{\alpha_1-\alpha_0}e^{-\alpha_0 t}+\frac{e^{-\alpha_1+\alpha_0}}{\alpha_0-\alpha_1}e^{-\alpha_1 t},& -L\leq t \leq -L/2 \\
        \frac{1}{\alpha_0-\alpha_1}e^{-\alpha_0 t} + \frac{1}{\alpha_1-\alpha_0}e^{-\alpha_1 t}, & -L/2\leq t \leq 0,\\
        0, & \mathrm{otherwise},
    \end{array} 
    \right.
\]
where $\alpha_0=0-\jmath 1.047, \alpha_1=0+\jmath 1.047$. The TEM input is
\begin{equation}
    g(t)=\sum_{k=1}^4 a_k \varphi_e(t-\tau_k)+\eta(t),
\end{equation}
where $\tau_1=1, \tau_2=4, \tau_3=7, \tau_4=9.5$, $a_1=4.36, a_2=4.04, a_3=4.92,$ and $ a_4=5.45$. Furthermore, $\eta(t)$ is uniform noise bounded by $\vb{\eta(t)}\leq\eta_\infty=0.05$.

The output of the filter is encoded with an IF model with parameters $\delta=0.5, b=0$. The E-spline, IF encoding and the Prony based recovery in \cite{Alexandru:2019:J} were implemented using publicly available software \cite{Alexandru:2019:W}. The signal $g(t)$, the IF output samples $\cb{t_n}$ and reconstructions with Algorithm \ref{alg:1} using $tol=0.05$ and the Prony based method in \cite{Alexandru:2019:J} are depicted in \fig{fig:comparison}(a1). 
We computed $\mathsf{Err}_g$ and averaged it over $100$ different noise signals $\eta(t)$. This resulted in $0.47\%$ for the Prony based recovery and $0.43\%$ for the proposed method.
The Prony based recovery is not guaranteed to work for overlapping pulses. We adjust the pulses to new time locations $\tau_k^\ast=0.4\cdot \tau_k$. The results, depicted in \fig{fig:comparison}(a2), show that the proposed method is able to handle a significant amount of overlapping. This is primarily due to step 2e) in Algorithm \ref{alg:1}, which removes the contribution of each identified pulse to future TEM samples.

\begin{figure*}[!t]
\begin{center}
\includegraphics[trim={0cm 0cm 0cm 0},clip,width=0.45\textwidth]{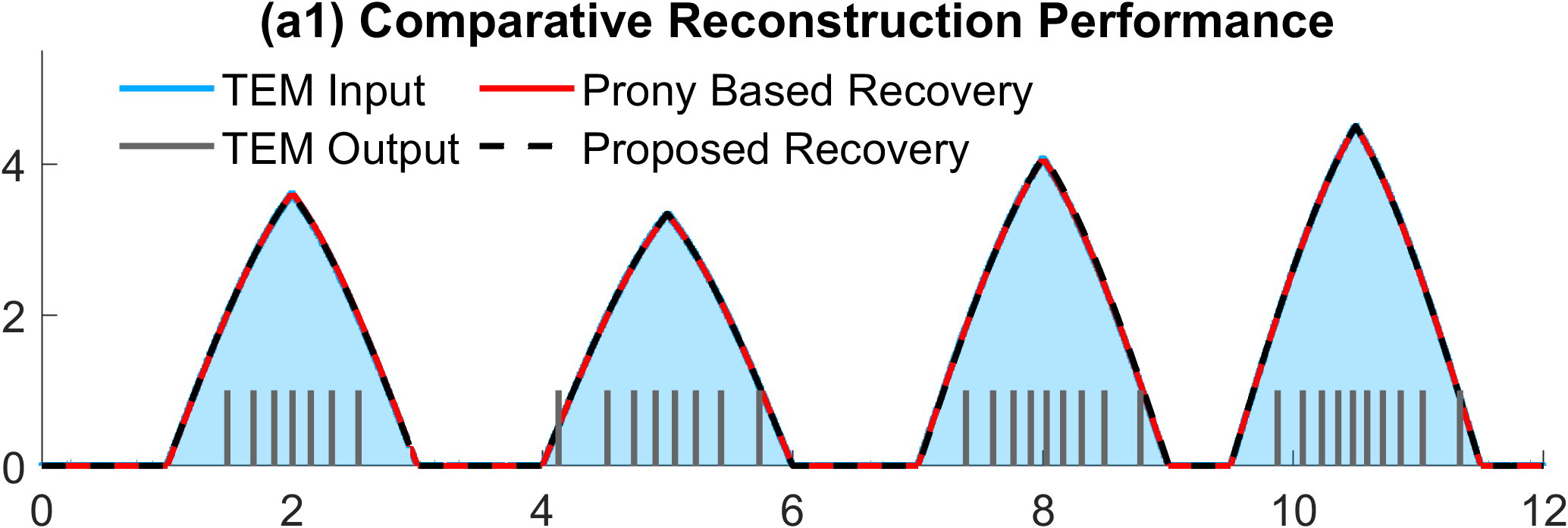}
\includegraphics[trim={0cm 0cm 0cm 0},clip,width=0.45\textwidth]{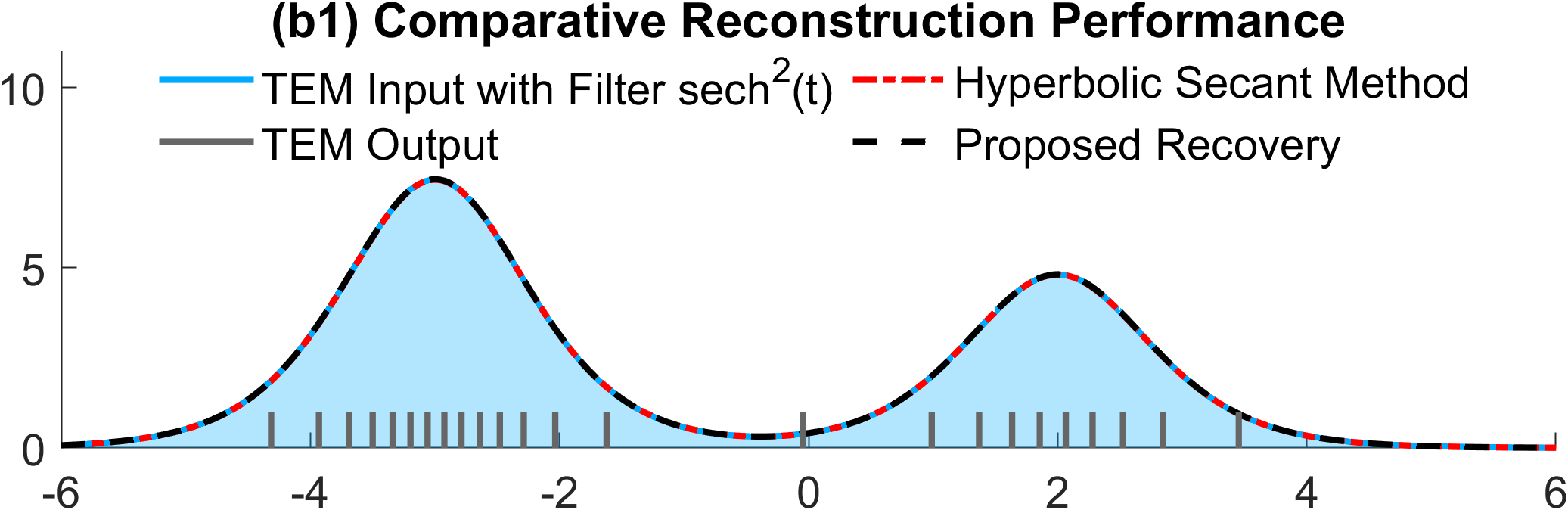}\\
\includegraphics[trim={0cm 0cm 0cm 0},clip,width=0.45\textwidth]{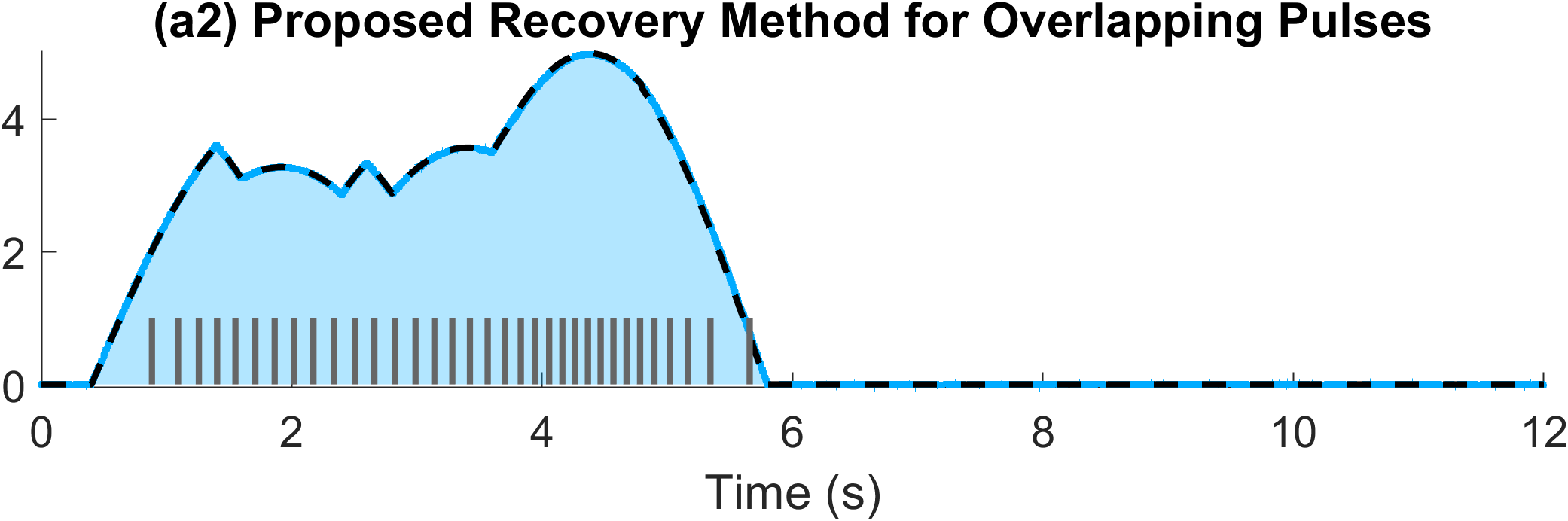}
\includegraphics[trim={0cm 0cm 0cm 0},clip,width=0.45\textwidth]{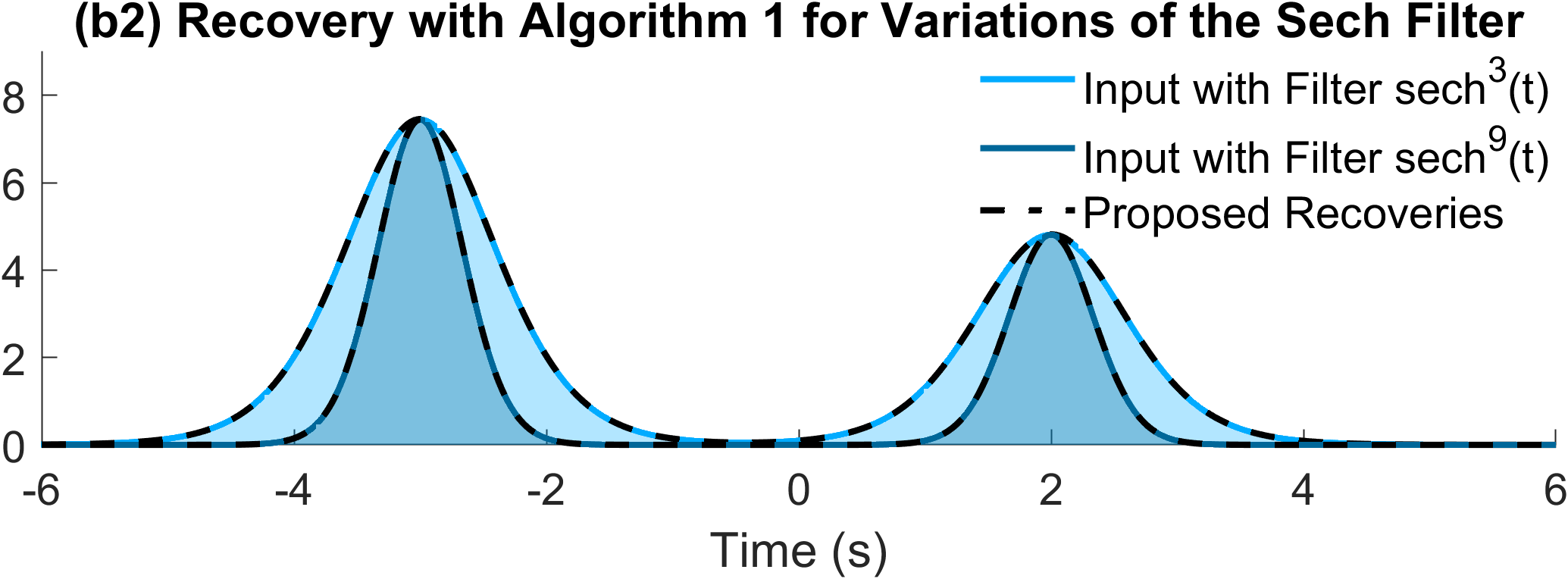}
\end{center}
\caption{ (a) Comparative recovery with the proposed method and the Prony-based method in \cite{Alexandru:2019:J}. (a1) The noisy FRI input is based on an E-spline of order $1$. With no overlaps, both methods recover the TEM input correctly. (a2) The pulses overlap, and thus the conditions in \cite{Alexandru:2019:J} are not satisfied and the recovery is unstable. (b) Comparative recovery with the method in \cite{Hilton:2023:C}. (b1) The input generated via $\varphi(t)=\mathrm{sech}^2(t)$ is recovered with Algorithm \ref{alg:1} and the method in \cite{Hilton:2023:C}. (b2) The filters are $\varphi(t)=\mathrm{sech}^3(t)$ and $\varphi(t)=\mathrm{sech}^9(t)$, which don't satisfy the conditions in \cite{Hilton:2023:C}, leading to unstable reconstructions. 
}
  \label{fig:comparison}
\end{figure*}
We further compared the proposed method with the method in \cite{Hilton:2023:C}, which is based on the assumption that the filter is constructed with the hyperbolic secant function defined as $\mathrm{sech}(t)=\frac{2e^t}{1+2e^t}$. Here we consider the case where $\varphi(t)=\mathrm{sech}^2(t)$, as presented in \cite{Hilton:2023:C}. We generated the TEM input as $g(t)=\sum_{k=1}^2 a_k\mathrm{sech}^2(t-\tau_k)$ where $\tau_1=-3,\tau_2=2, a_1=7.44, a_2=4.8$. The TEM used is an IF model with $\delta=1, b=0$. The signal $g(t)$, the IF output samples $\cb{t_n}$ and reconstructions with Algorithm \ref{alg:1} with $tol=0.01$ and the method in \cite{Hilton:2023:C} are depicted in \fig{fig:comparison}(b1). The resulted errors are $\mathsf{Err}_{\tau}=0.03\%$, $\mathsf{Err}_{a}=0.005\%$, and $\mathsf{Err}_g=0.13\%$ for the method in \cite{Hilton:2023:C} and $\mathsf{Err}_{\tau}=0.04\%$, $\mathsf{Err}_{a}=0.14\%$, and $\mathsf{Err}_g=0.17\%$ for Algorithm \ref{alg:1}. To exploit the flexibility of the proposed method, we repeated the experiment with the same parameters by changing the filter to $\varphi(t)=\mathrm{sech}^3(t)$ and $\varphi(t)=\mathrm{sech}^9(t)$. The method in \cite{Hilton:2023:C} is not compatible with
these filters and leads to unstable reconstructions. We note that these filters also don't satisfy the conditions of Theorem \ref{theo:1}, as they are supported on the real axis. Interestingly, Algorithm \ref{alg:1} still performs well with errors $\mathsf{Err}_g=0.2\%$ and $\mathsf{Err}_g=0.12\%$, respectively. The results are illustrated in \fig{fig:comparison}(b2).

\subsection{Hardware Experiment}
\label{sect:hardware}
We validate the proposed recovery method using a hardware implementation of the acquisition pipeline in \fig{fig:asdm} as follows. The FRI input signal was generated on a PC and fed to the circuit via the audio channel. The input was subsequently amplified prior to being injected into the ASDM hardware, depicted in \fig{fig:hw}(a). 

\begin{figure*}[!t]
\includegraphics[trim={0cm 0cm 0cm 0},clip,width=0.31\textwidth]{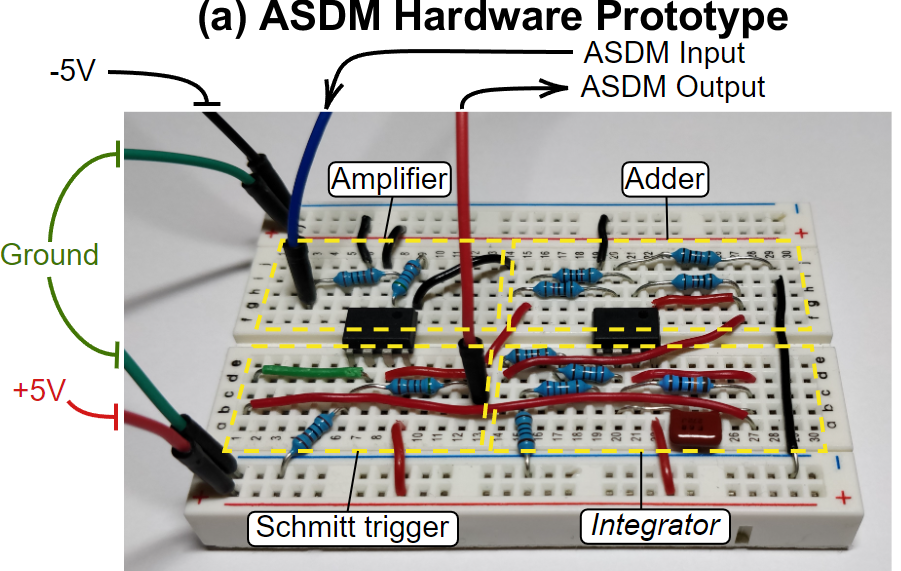}
\includegraphics[trim={0cm 0cm 0cm 0},clip,width=0.32\textwidth]{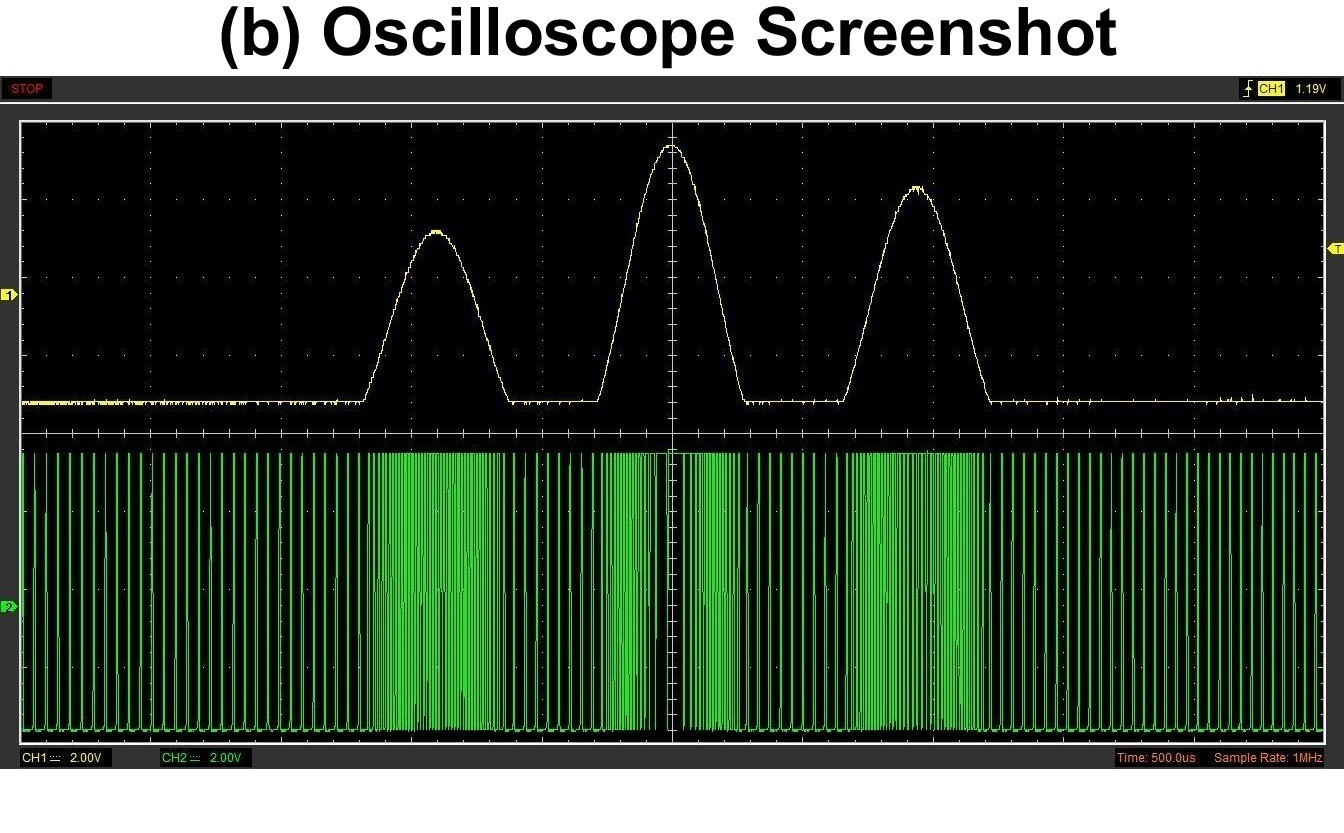}
\includegraphics[trim={0cm 0cm 0cm 0},clip,width=0.34\textwidth]{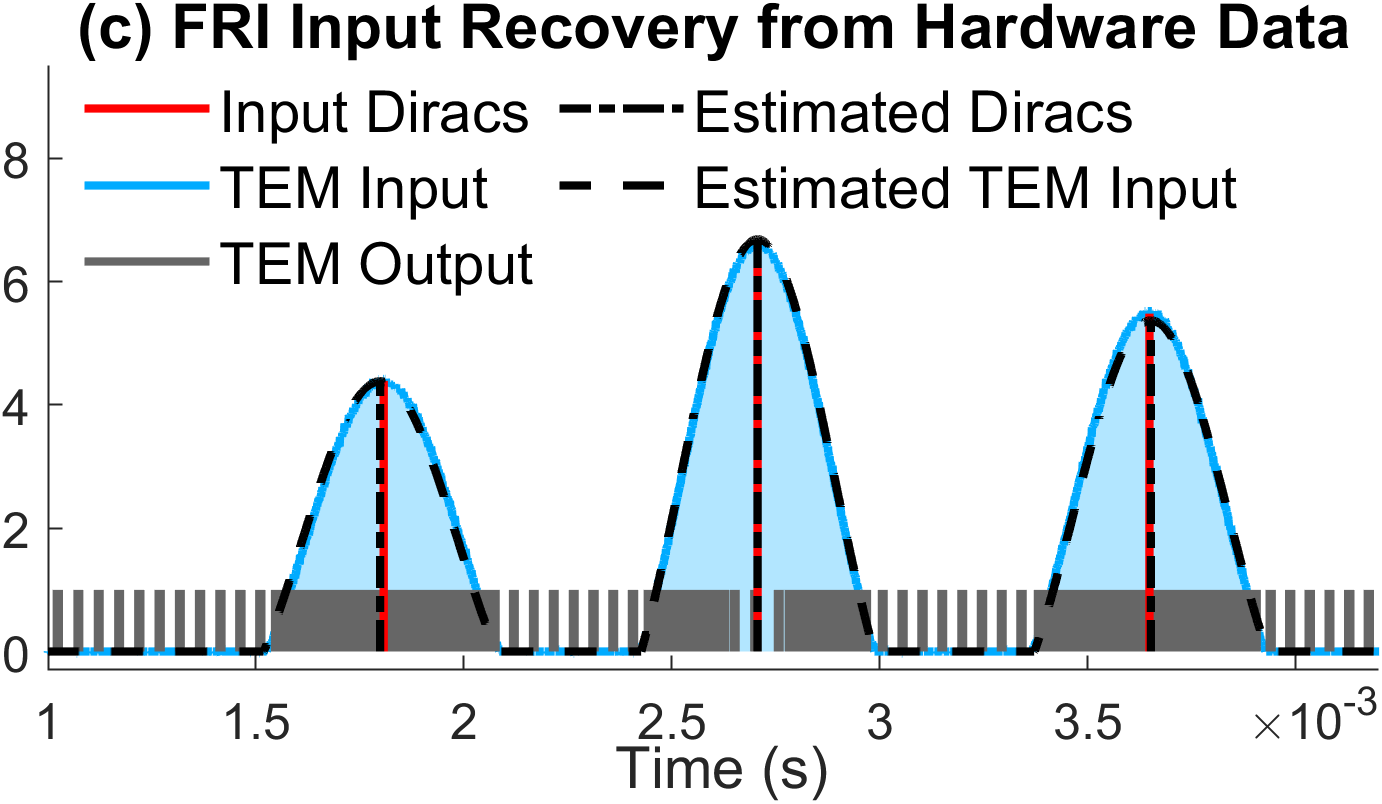}
\caption{(a) ASDM Hardware. (b) Oscilloscope screenshot depicting the ASDM input $g(t)$ via channel 1 (yellow) and ASDM output $z(t)$ via channel 2 (green).  (c) FRI Input Recovery with Algorithm \ref{alg:1}.}
\label{fig:hw}
\end{figure*}

We generated an input $g(t)$ using  $\varphi(t)=\frac{\sin(\Omega t)}{\Omega t}\cdot\ind_{\sqb{-\frac{\pi}{\Omega},\frac{\pi}{\Omega}}}(t), $ 
where $\Omega=\frac{\pi}{280}\ \mathrm{Mrad/s}$, representing the windowed main lobe of a sinc function. The input satisfies $g(t)=\sum_{k=1}^3 a_k\varphi(t-\tau_k), t\in\sqb{0,6.2\ \mathrm{ms}}$, where $\tau_1=1.807\ \mathrm{m s}$, $\tau_2=2.706\ \mathrm{m s}$, $\tau_3=3.648\ \mathrm{m s}$, $a_1=4.39$, $a_2=6.62$, $a_3=5.48$. We note that, although using the $\mathrm{sinc}$ function, signal $g(t)$ is not bandlimited due to windowing. The ASDM responded to $g(t)$ with output signal $z(t)$. 
We extracted the output switching times $\cb{t_n}_{n=1}^{417}$ by computing the zero crossings of $z(t)$. 

When the TEM is simulated, as in sections \ref{subsect:bspline}-\ref{sect:comparative}, its parameters are known \emph{a priori}. In the case of a hardware experiment, the parameters need to be identified from the data \cite{Florescu:2022:C}. To increase the precision of measurements we use one every $6$ ASDM samples denoted as $\overline{t}_n\triangleq t_{6n+1}$. Furthermore, we compute $\bLn g\triangleq \int_{\overline{t}_n}^{\overline{t}_{n+1}}g(s) ds= \sum_{m=1}^6 \LN{6n+m} g$. Using this notation, we derive the following from \eqref{eq:ttransform} 
\begin{equation*}
    \begin{aligned}
        \bLn g= \sum_{m=1}^6 \rb{-1}^{6n+m}\sqb{2\delta- b\Delta t_{6n+m}}-g_0 \Delta t_{6n+m}
        =c_1\sum_{m_1=0}^2 \Delta t_{6n+2m_1+1}-c_2\sum_{m_1=0}^2\Delta t_{6n+2m_1+2}        ,
    \end{aligned}
\end{equation*}
where $c_1=b-g_0$, $c_2=b+g_0$ represent hardware parameters.
We identify $\widehat{c}_1=7.5415$ and $\widehat{c}_2=-1.7565$ above via least squares using $\cb{\bLn g}_{41}^{53}$, which represent input integrals over the support of the last pulse centered in $\tau_3$. Subsequently, we use $\widehat{c}_1$ and $\widehat{c}_2$ to compute $\cb{\bLn g}_{1}^{69}$, which covers the whole support of $g(t)$. To compensate for nonidealities, for each pulse $k$, we compute $\widehat{n}_k$ as in Algorithm \ref{alg:1} from $\cb{\bLn g}_{1}^{69}$ using $tol=130$, and subsequently run the algorithm again by replacing step 2a) with $\widehat{n}_k^*=\widehat{n}_k+1$. Each pair of resulted values for $\widehat{\tau}_k$ and $\widehat{a}_k$ are averaged to compute the final FRI parameters. The FRI signal $\widehat{g}(t)$ and parameters $\cb{\widehat{\tau}_k,\widehat{a}_k}_{k=1}^3$ are depicted in \fig{fig:hw}(b). The corresponding recovery errors are $\mathsf{Err}_{\tau}=0.19\%$, $\mathsf{Err}_{a}=1.08\%$, and $\mathsf{Err}_g=3.15\%$.

\vspace{2em}
\section{Proofs}
\label{sect:proofs}
\if\BoundDerInAppendix\FL
\else
\begin{proof}[\textbf{Proof for Lemma \ref{lem:monotonic}}]

\end{proof}
\fi

\if\NoisyRecInAppendix\FL
\else
\begin{proof}[\textbf{Proof for Theorem \ref{theo:2} (Noisy Input Recovery)}]

\end{proof}

\fi

\vspace{2em}
\section{Conclusions}	
\label{sect:conclusions}
In this paper, we introduced a new recovery method for FRI signals from TEM measurements that can tackle a wider class of FRI filters than previously possible. We introduced guarantees in the noiseless and noisy scenarios. We validated the method numerically, showing it can tackle existing FRI filters, but also random filters, which are not compatible with existing approaches. We further validated our method via a TEM hardware experiment. When the FRI filter is not designed, as it may result from the environment and the physical properties of the acquisition device, the proposed method is still applicable and allows bypassing the filter modelling stage. Additionally, by allowing a wider class of filters, the proposed algorithm can incorporate non-idealities, thus enabling a co-design of hardware and algorithms for future FRI acquisition systems.

\ifCLASSOPTIONcaptionsoff
\newpage
\fi

\bibliographystyle{IEEEtran}
\bibliography{IEEEabrv.bib,Ref2023.bib}

\begin{thebibliography}{10}
\providecommand{\url}[1]{#1}
\csname url@samestyle\endcsname
\providecommand{\newblock}{\relax}
\providecommand{\bibinfo}[2]{#2}
\providecommand{\BIBentrySTDinterwordspacing}{\spaceskip=0pt\relax}
\providecommand{\BIBentryALTinterwordstretchfactor}{4}
\providecommand{\BIBentryALTinterwordspacing}{\spaceskip=\fontdimen2\font plus
\BIBentryALTinterwordstretchfactor\fontdimen3\font minus
  \fontdimen4\font\relax}
\providecommand{\BIBforeignlanguage}[2]{{%
\expandafter\ifx\csname l@#1\endcsname\relax
\typeout{** WARNING: IEEEtran.bst: No hyphenation pattern has been}%
\typeout{** loaded for the language `#1'. Using the pattern for}%
\typeout{** the default language instead.}%
\else
\language=\csname l@#1\endcsname
\fi
#2}}
\providecommand{\BIBdecl}{\relax}
\BIBdecl

\bibitem{Shannon:1949:C}
C.~E. Shannon, ``Communication in the presence of noise,'' \emph{Proceedings of
  the IRE}, vol.~37, no.~1, pp. 10--21, 1949.

\bibitem{Unser:2000:C}
M.~Unser, ``Sampling-50 years after shannon,'' \emph{Proceedings of the IEEE},
  vol.~88, no.~4, pp. 569--587, 2000.

\bibitem{Aldroubi:2001:J}
A.~Aldroubi and K.~Gr{\"o}chenig, ``Nonuniform sampling and reconstruction in
  shift-invariant spaces,'' \emph{SIAM review}, vol.~43, no.~4, pp. 585--620,
  2001.

\bibitem{Vetterli:2002}
M.~Vetterli, P.~Marziliano, and T.~Blu, ``Sampling signals with finite rate of
  innovation,'' \emph{{IEEE} Trans. Signal Process.}, vol.~50, no.~6, pp.
  1417--1428, 2002.

\bibitem{Baechler:2017:J}
G.~Baechler, A.~Scholefield, L.~Baboulaz, and M.~Vetterli, ``Sampling and exact
  reconstruction of pulses with variable width,'' \emph{{IEEE} Trans. Signal
  Process.}, vol.~65, no.~10, pp. 2629--2644, 2017.

\bibitem{Pan:2016:J}
H.~Pan, T.~Blu, and M.~Vetterli, ``Towards generalized {FRI} sampling with an
  application to source resolution in radioastronomy,'' \emph{{IEEE} Trans.
  Signal Process.}, vol.~65, no.~4, pp. 821--835, 2016.

\bibitem{Wei:2016:J}
X.~Wei and P.~L. Dragotti, ``{FRESH-FRI}-based single-image super-resolution
  algorithm,'' \emph{{IEEE} Trans. Image Process.}, vol.~25, no.~8, pp.
  3723--3735, 2016.

\bibitem{Tur:2011:J}
R.~Tur, Y.~C. Eldar, and Z.~Friedman, ``Innovation rate sampling of pulse
  streams with application to ultrasound imaging,'' \emph{{IEEE} Trans. Signal
  Process.}, vol.~59, no.~4, pp. 1827--1842, 2011.

\bibitem{Onativia:2013:J}
J.~Onativia, S.~R. Schultz, and P.~L. Dragotti, ``A finite rate of innovation
  algorithm for fast and accurate spike detection from two-photon calcium
  imaging,'' \emph{Journal of Neural Engineering}, vol.~10, no.~4, 2013.

\bibitem{Bhandari:2020:J}
A.~Bhandari, F.~Krahmer, and R.~Raskar, ``On unlimited sampling and
  reconstruction,'' \emph{{IEEE} Trans. Signal Process.}, vol.~69, pp. 3827 --
  3839, Dec. 2020.

\bibitem{Bhandari:2021:Ja}
A.~Bhandari, F.~Krahmer, and T.~Poskitt, ``Unlimited sampling from theory to
  practice: {Fourier-Prony} recovery and prototype {ADC},'' \emph{{IEEE} Trans.
  Signal Process.}, pp. 1--1, Sep. 2021.

\bibitem{Florescu:2021:J}
D.~Florescu, F.~Krahmer, and A.~Bhandari, ``The surprising benefits of
  hysteresis in unlimited sampling: Theory, algorithms and experiments,''
  \emph{{IEEE} Trans. Signal Process.}, vol.~70, pp. 616 -- 630, 2022.

\bibitem{Florescu:2022:Jb}
D.~Florescu and A.~Bhandari, ``Time encoding via unlimited sampling: Theory,
  algorithms and hardware validation,'' \emph{{IEEE} Trans. Signal Process.},
  vol.~70, pp. 4912--4924, 2022.

\bibitem{Lazar:2004:Jc}
A.~A. Lazar and L.~T. T{\'o}th, ``Perfect recovery and sensitivity analysis of
  time encoded bandlimited signals,'' \emph{{IEEE} Trans. Circuits Syst. {I}},
  vol.~51, no.~10, pp. 2060--2073, 2004.

\bibitem{Florescu:2015:J}
D.~Florescu and D.~Coca, ``A novel reconstruction framework for time-encoded
  signals with integrate-and-fire neurons,'' \emph{Neural computation},
  vol.~27, no.~9, pp. 1872--1898, 2015.

\bibitem{Gontier:2014:J}
D.~Gontier and M.~Vetterli, ``Sampling based on timing: Time encoding machines
  on shift-invariant subspaces,'' \emph{Appl. Comput. Harmon. Anal.}, vol.~36,
  no.~1, pp. 63--78, 2014.

\bibitem{Florescu:2021:C}
D.~Florescu, F.~Krahmer, and A.~Bhandari, ``Event-driven modulo sampling,'' in
  \emph{IEEE Intl. Conf. on Acoustics, Speech and Sig. Proc. (ICASSP)}, 2021,
  pp. 5435--5439.

\bibitem{Alexandru:2019:J}
R.~Alexandru and P.~L. Dragotti, ``Reconstructing classes of non-bandlimited
  signals from time encoded information,'' \emph{{IEEE} Trans. Signal
  Process.}, vol.~68, pp. 747--763, 2019.

\bibitem{Hilton:2023:C}
M.~Hilton and P.~L. Dragotti, ``Sparse asynchronous samples from networks of
  {TEM}s for reconstruction of classes of non-bandlimited signals,'' in
  \emph{IEEE Intl. Conf. on Acoustics, Speech and Sig. Proc. (ICASSP)}, 2023.

\bibitem{Hilton:2021:Cb}
M.~Hilton, R.~Alexandru, and P.~L. Dragotti, ``Guaranteed reconstruction from
  integrate-and-fire neurons with alpha synaptic activation,'' in \emph{IEEE
  Intl. Conf. on Acoustics, Speech and Sig. Proc. (ICASSP)}, 2021, pp.
  5474--5478.

\bibitem{Rudresh:2020:C}
S.~Rudresh, J.~Kamath, and S.~Seelamantula, ``A time-based sampling framework
  for finite-rate-of-innovation signals,'' in \emph{IEEE Intl. Conf. on
  Acoustics, Speech and Sig. Proc. (ICASSP)}, 2020, pp. 5585--5589.

\bibitem{Namman:2021:J}
H.~Naaman, S.~Mulleti, and Y.~C. Eldar, ``{FRI-TEM}: Time encoding sampling of
  finite-rate-of-innovation signals,'' \emph{{IEEE} Trans. Signal Process.},
  vol.~70, pp. 2267--2279, 2022.

\bibitem{Kalra:2022:C}
M.~Kalra, Y.~Bresler, and K.~Lee, ``Identification of pulse streams of unknown
  shape from time encoding machine samples,'' in \emph{IEEE Intl. Conf. on
  Acoustics, Speech and Sig. Proc. (ICASSP)}, 2022, pp. 5148--5152.

\bibitem{Kamath:2023:C}
A.~J. Kamath and C.~S. Seelamantula, ``Multichannel time-encoding of
  finite-rate-of-innovation signals,'' in \emph{IEEE Intl. Conf. on Acoustics,
  Speech and Sig. Proc. (ICASSP)}, 2023, pp. 1--5.

\bibitem{Florescu:2023:Ca}
D.~Florescu and A.~Bhandari, ``Time encoding of sparse signals with flexible
  filters,'' in \emph{Intl. Conf. on Sampling Theory and Applications
  (SampTA)}, 2023.

\bibitem{Ozols:2015:C}
K.~Ozols, ``Implementation of reception and real-time decoding of {ASDM}
  encoded and wirelessly transmitted signals,'' in \emph{Intl. Conf.
  Radioelektronika}, 2015, pp. 236--239.

\bibitem{Florescu:2022:C}
D.~Florescu and A.~Bhandari, ``Modulo event-driven sampling: System
  identification and hardware experiments,'' in \emph{IEEE Intl. Conf. on
  Acoustics, Speech and Sig. Proc. (ICASSP)}, 2022, pp. 5747--5751.

\bibitem{Florescu:2018:J}
D.~Florescu and D.~Coca, ``Identification of linear and nonlinear sensory
  processing circuits from spiking neuron data,'' \emph{Neural Comput.},
  vol.~30, no.~3, pp. 670--707, 2018.

\bibitem{Maass:2002:J}
W.~Maass, T.~Natschl{\"a}ger, and H.~Markram, ``Real-time computing without
  stable states: A new framework for neural computation based on
  perturbations,'' \emph{Neural Comput.}, vol.~14, no.~11, pp. 2531--2560,
  2002.

\bibitem{Florescu:2019:J}
D.~Florescu and D.~Coca, ``Learning with precise spike times: A new decoding
  algorithm for liquid state machines,'' \emph{Neural Comput.}, vol.~31, no.~9,
  pp. 1825--1852, 2019.

\bibitem{Wei:2012:J}
X.~Wei, B.~Thierry, and P.-L. Dragotti, ``Finite rate of innovation with
  non-uniform samples,'' in \emph{Intl. Conf. on Sig. Proc., Communication and
  Computing (ICSPCC 2012)}, 2012, pp. 369--372.

\bibitem{Geddes:1992:B}
K.~O. Geddes, S.~R. Czapor, and G.~Labahn, \emph{Algorithms for computer
  algebra}.\hskip 1em plus 0.5em minus 0.4em\relax Springer Science \& Business
  Media, 1992.

\bibitem{Shukla:2007:J}
P.~Shukla and P.~L. Dragotti, ``Sampling schemes for multidimensional signals
  with finite rate of innovation,'' \emph{IEEE Transactions on Signal
  Processing}, vol.~55, no.~7, pp. 3670--3686, 2007.

\bibitem{Alexandru:2019:W}
\BIBentryALTinterwordspacing
R.~Alexandru, ``Code for \emph{Reconstructing classes of non-bandlimited
  signals from time encoded information},'' Dec 2019. [Online]. Available:
  \url{https://github.com/rialexandru01/Reconstructing-Classes-of-Non-bandlimited-Signals-from-Time-Encoded-Information}
\BIBentrySTDinterwordspacing

\end{thebibliography}

\end{document}